\newcommand*{\defeq}{\mathrel{\vcenter{\baselineskip0.5ex \lineskiplimit0pt
                     \hbox{\scriptsize.}\hbox{\scriptsize.}}}%
                     =}
\newcommand{\repseq}[3]{\xymatrix @-1pc{#1^{(#2)} \ar@{<->}[r] & \cdots 
\ar@{<->}[r] & #1^{(#3)} }}
\newcommand{\Field}{\mathbb{F}}
\newcommand{\Basis}{\mathcal{B}}
\newcommand{\sprod}{\langle\cdot,\cdot\rangle}
\newcommand{\R}{\mathbb{R}}
\newcommand{\idcat}{\mathds{1}}
\newcommand{\quiv}{\mathcal{Q}}
\newcommand{\Xmod}{\mathbb{X}}
\newcommand{\Umod}{\mathbb{U}}
\newcommand{\Wmod}{\mathbb{W}}
\newcommand{\Vmod}{\mathbb{V}}
\newcommand{\Imod}{\mathbb{I}}
\newcommand{\W}{W} 
 \newcommand{\bidx}{\mathrm{{\bf b}}}
 \newcommand{\didx}{\mathrm{{\bf d}}}
 \newcommand{\leqb}{\leq_{\bidx}}
 \newcommand{\geqb}{\geq_{\bidx}}
 \newcommand{\leqd}{\leq_{\didx}}
 \newcommand{\geqd}{\geq_{\didx}}
\DeclareMathOperator*{\minb}{min_{\leqb}}
\DeclareMathOperator*{\mind}{min_{\leqd}}
\DeclareMathOperator*{\maxb}{max_{\leqb}}
\DeclareMathOperator*{\maxd}{max_{\leqd}}
\newcommand{\Matrix}{\mathbf{M}}
\newcommand{\im}{\mathrm{im} \ }
\newcommand{\V}{V}                        
\newcommand{\K}{\mathbf{K}}               
\newcommand{\Chain}{\mathbf{C}}
\newcommand{\Hom}{\mathbf{H}}
\begin{document}

\title{Computing Zigzag Persistent Cohomology}

 \author{Cl\'ement Maria~\inst{1} \and Steve Oudot~\inst{2}}
\institute{The University of Queensland \email{c.maria@uq.edu.au}
\and INRIA Saclay-\^Ile de France \email{\tt steve.oudot@inria.fr}}

\maketitle

\begin{abstract} 
Zigzag persistent homology is a powerful generalisation of persistent
homology that  allows one not only to compute persistence diagrams with
less noise and using less memory, but also to use persistence in new
fields of application. However, due to the increase in complexity of
the algebraic treatment of the theory, most algorithmic results in the
field have remained of theoretical nature.

This article describes an efficient algorithm to compute zigzag persistence, emphasising on its practical interest. The algorithm is a zigzag persistent cohomology algorithm, based on the dualisation of reflections and transpositions transformations within the zigzag sequence. 

We provide an extensive experimental study of the algorithm. We study the algorithm along two directions. First, we compare its performance with zigzag persistent homology algorithm and show the interest of cohomology in zigzag persistence. Second, we illustrate the interest of zigzag persistence in topological data analysis by comparing it to state of the art methods in the field, specifically optimised algorithm for standard persistent homology and sparse filtrations. We compare the memory and time complexities of the different algorithms, as well as the quality of the output persistence diagrams.

The implementation of of the zigzag persistent cohomology algorithm will be made available as part of the computational library {\tt Gudhi}.
\end{abstract}

\newpage

\section{Introduction}
\label{sec:intro}

{\em Persistent homology} is an algebraic method for tracking
topological features across a growing family of spaces. The theory has
found many applications, in particular in data analysis, where it
allows one to infer the ``shape of data'', {\em i.e.}, to compute
topological properties of a space knowing only a finite set of
sample points. The framework works as follows; starting from a set of
points in a metric space, it builds a nested family of approximations
of the underlying shape, then it tracks the evolution of the
topological features across the family, telling the ones that persist
significantly from the ones that do not---the so-called ``topological
noise''.

Since its introduction fifteen years ago, the theory of persistent
homology has been
well-established~\cite{DBLP:journals/dcg/Cohen-SteinerEH07,DBLP:books/daglib/0025666,DBLP:journals/dcg/EdelsbrunnerLZ02,DBLP:journals/dcg/ZomorodianC05}
and the topic has undergone a deep algorithmic study, with the
introduction of new theoretical
algorithms~\cite{DBLP:journals/comgeo/ChenK13,DBLP:conf/compgeom/Cohen-SteinerEM06,DBLP:journals/dcg/SilvaMV11,DBLP:conf/compgeom/MilosavljevicMS11},
more practical
contributions~\cite{Bauer:arXiv1303.0477,DBLP:conf/alenex/BauerKR14,boissonnatmariacamalgorithmica,Chen11persistenthomology},
and extensions of the original computational
framework~\cite{boissonnatmariaesa2014,DBLP:conf/compgeom/DeyFW14}.
This effort towards better algorithms has in particular led to
dramatic improvements of the running time of computations in practice,
making the memory usage the new bottleneck. From these contributions
have emerged multiple efficient software libraries in the field, {\em
  e.g.}~\cite{dipha_lib,gudhi:PersistentCohomology,dionysus_morozov}.

To address the unavoidable memory limitations in the standard setting of persistent homology, new sparse geometric constructions have been recently defined. They provide faithful approximations to the ``shape of data'' while remaining of moderate size~\cite{DBLP:conf/compgeom/DeyFW14,DBLP:conf/esa/DeySY16,DBLP:conf/isaac/KerberS13,DBLP:journals/dcg/Sheehy13}. These methods have been implemented in software~\cite{simba_lib} and validated experimentally~\cite{DBLP:conf/esa/DeySY16} where, at the cost of small errors in the output persistence diagram, they allow the computation of the persistent homology of significantly larger data sets, which would be otherwise intractable in the standard persistence setting. 

{\em Zigzag persistent homology} is a generalisation of persistent homology that allows to measure and track the topology of spaces that both grow and shrink. This is interesting in practice because it allows one to further reduce the size of the constructions, in addition to reducing the amount of noise in the output diagrams~\cite{os-zz-14}. Unfortunately, the few known algorithms for computing zigzag persistence  prevent us from using the optimisations that have been developed for standard persistence. As a result, the gain in terms of memory usage and quality of the output is counterbalanced by a significant loss in terms of running time and sheer performance.

\vspace{0.3cm} 
\noindent
{\bf Our contributions.} In this article, we 
aim at making a step towards adapting the optimisations developed for
standard persistence to zigzag persistence. For this we introduce a
zigzag persistent {\em cohomology} algorithm by dualising the
reflection and transposition algorithm
of~\cite{DBLP:conf/soda/MariaO15}. While cohomology has been used for
some time now in standard persistence, where it has accelerated the
computations
significantly~\cite{boissonnatmariacamalgorithmica,DBLP:journals/dcg/SilvaMV11,DBLP:conf/compgeom/DeyFW14},
its use in zigzag persistence is still inexistent because it requires
some novel and non-trivial modifications of the approach. This is our
main contribution. 

We also evaluate the resulting gain for the zigzag persistence
computation pipeline in terms of running time and overall
performance. To do so, we run experiments along two directions. First,
we compare the performance of the new zigzag persistent cohomology
algorithm against the existing zigzag persistence
implementation. Second, we provide a qualitative analysis of the
persistence diagrams obtained with the various algorithms in the
field, and we illustrate the interest of zigzag persistence in the
context of topological data analysis.


\section{Background}

\noindent
{\bf Quiver theory.}  
An \emph{$A_n$-type quiver} $\quiv$ is a directed graph:
\[
\xymatrix{\bullet_1 \ar@{<->}[r] & \bullet_2 \ar@{<->}[r] & \cdots \ar@{<->}[r] &
 \bullet_{n-1} \ar@{<->}[r] & \bullet_n\\}
\]
where, by convention in this article, bidirectional arrows are either forward or backward. 

Given a fixed field $(\Field,+,\cdot)$, an
\emph{$\Field$-representation} of $\quiv$ is an assignment of a finite
dimensional $\Field$-vector space $V_i$ for every node $\bullet_i$ and
an assignment of a linear map $f_i\colon V_i \leftrightarrow V_{i+1}$ for every
 arrow $\bullet_i \leftrightarrow \bullet_{i+1}$, the orientation of the map being the same as that of the arrow. We denote such a representation by $\Vmod =
(V_i,f_i)$. In computational topology, an $\Field$-representation of
an $A_n$-type quiver is called a \emph{zigzag module}.


\noindent
\begin{minipage}{0.8\textwidth}
Let $\Vmod = (V_i,f_i)$ and $\Wmod = (W_i, g_i)$ be two
$\Field$-representations of a same quiver $\quiv$.  A \emph{morphism
  of representations} $\phi\colon \Vmod \to \Wmod$ is a set of linear
maps $\{\phi_i: V_i \to W_i\}_{i = 1 \ldots n}$ such that the diagram on the right commutes for every 
arrow of~$\quiv$. The morphism is called an \emph{isomorphism} (denoted by $\cong$) if every 
$\phi_i$ is bijective.  

\end{minipage}
\begin{minipage}{0.2\textwidth}
\[
\xymatrix{V_i \ar@{<->}[r]^{f_i}\ar[d]_{\phi_i} & V_{i+1}\ar[d]^{\phi_{i+1}} \\
          W_{i}\ar@{<->}[r]^{g_i}                       & W_{i+1}}
\]
$\ $
\end{minipage}

The  \emph{direct sum} of two $\Field$-representations $\Vmod = (V_i, f_i),\Wmod = (W_i, g_i)$, denoted by 
$\Vmod \oplus \Wmod$, is the representation of $\quiv$ with space $V_i\oplus 
W_i$ for every node $\bullet_i$, and with map
$f_i \oplus g_i = \left(\begin{smallmatrix}f_i&0\\0&g_i\end{smallmatrix}\right)$ 
for every arrow $\bullet_i \leftrightarrow \bullet_{i+1}$.  An
$\Field$-representation $\Vmod$ is \emph{decomposable} if it can be
written as the direct sum of two non-trivial representations. It is
otherwise said to be \emph{indecomposable}.

Finally, for any $1 \leq b \leq d \leq n$, define the \emph{interval representation} $\Imod[b;d]$ as follows:
\[
\xy
\xymatrix @-0.8pc{
0 \ar@{<->}[r]^-0 & 
\cdots \ar@{<->}[r]^-0 & 
0 \ar@{<->}[r]^-0 & 
\Field \ar@{<->}[r]^-{\idcat} & \cdots \ar@{<->}[r]^-{\idcat} & \Field \ar@{<->}[r]^-0 
& 0 \ar@{<->}[r]^-0 & \cdots \ar@{<->}[r]^-0 & 0 
}
\POS"1,1"."1,3"!C*\frm{_\}},+D*++!U\txt{$\scriptstyle{[1;b-1]}$}
\POS"1,4"."1,6"!C*\frm{_\}},+D*++!U\txt{$\scriptstyle{[b;d]}$}
\POS"1,7"."1,9"!C*\frm{_\}},+D*++!U\txt{$\scriptstyle{[d+1;n]}$}
\endxy
\]
where the maps $0$ and $\idcat$ stand respectively for the null map and the identity map.

Theorem~\ref{thm:zz_gabriel} states every representation of an 
$A_n$-type quiver is decomposable into interval representations, which are the indecomposables for that quiver:

\begin{theorem}[Gabriel~\cite{gabrieltheoremoriginal}]
\label{thm:zz_gabriel}
Every $\Field$-representation $\Vmod$ of an $A_n$-type quiver is
decomposable as a direct sum of indecomposables: $\ \ \ \ \ \Vmod = \Vmod^1 \oplus \Vmod^2 \oplus \cdots \oplus \Vmod^N, \ \ \ \ \ $ 
where each indecomposable $\Vmod^j$ is isomorphic to some interval
representation $\Imod[b_j;d_j]$.
\end{theorem}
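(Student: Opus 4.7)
The plan is to prove the theorem by induction on the number of vertices $n$. The case $n = 1$ is immediate: a representation is a single finite-dimensional vector space $V_1$, which decomposes into $\dim V_1$ copies of $\Imod[1;1]$.

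For the inductive step, I would remove the last vertex $\bullet_n$, apply the inductive hypothesis to the restriction $\Vmod'$ on vertices $\bullet_1, \ldots, \bullet_{n-1}$ to obtain $\Vmod' \cong \bigoplus_k \Imod[b_k; d_k]$, and then re-attach $V_n$ together with the arrow $f_{n-1}$. Only summands with $d_k = n-1$ contribute to $V_{n-1}$ and can interact with the new arrow; all other summands survive unchanged in the final decomposition.

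I would split on the orientation of $f_{n-1}$. In the forward case $f_{n-1} \colon V_{n-1} \to V_n$, the summands $\Imod[b_k; n-1]$ sent to zero remain as they are, those sent isomorphically onto a line in $V_n$ extend to $\Imod[b_k; n]$, and any complement of $\im f_{n-1}$ in $V_n$ contributes additional $\Imod[n; n]$ summands. The backward case $f_{n-1} \colon V_n \to V_{n-1}$ is handled symmetrically by decomposing $V_n$ as $\ker f_{n-1}$ plus a complement that maps isomorphically onto $\im f_{n-1}$; the kernel yields $\Imod[n;n]$ summands, while the complement must be matched with summands of $\Vmod'$ that end at $n-1$ in order to extend them to $\Imod[b_k;n]$.

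The main obstacle is the compatibility step: the basis of $V_{n-1}$ inherited from the inductive decomposition is not in general adapted to $\ker f_{n-1}$ (or to $\im f_{n-1}$), and reshuffling it at vertex $n-1$ alone breaks the interval structure of $\Vmod'$ unless the change of basis is propagated consistently backward along each affected interval. Carrying this out requires a delicate Gaussian-elimination argument among summands whose supports reach $n-1$, performed in such a way that each modified generator still spans an interval-shaped subrepresentation; when two long intervals share a pivot at level $n-1$, one typically has to shorten one of them to reconcile the two structures. A more conceptual alternative that I would keep in reserve is to invoke the Krull--Schmidt theorem and reduce the statement to classifying the indecomposables: one then shows that any indecomposable $\Field$-representation of an $A_n$-type quiver must be supported on a contiguous range of vertices with every $V_i$ at most one-dimensional there, hence must be isomorphic to some $\Imod[b;d]$.
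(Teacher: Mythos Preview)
The paper does not prove this statement at all: Theorem~\ref{thm:zz_gabriel} is quoted as a classical result, attributed to Gabriel with a citation to the original paper, and used purely as background for the rest of the article. There is therefore no ``paper's own proof'' to compare against.

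Your inductive outline is one of the standard routes to the $A_n$ case and is essentially correct as a sketch. You have also put your finger on the only genuinely non-trivial point: after applying the inductive hypothesis, the basis of $V_{n-1}$ coming from the interval decomposition of $\Vmod'$ need not be adapted to $\ker f_{n-1}$ (forward arrow) or to $\im f_{n-1}$ (backward arrow). The fix is exactly the triangular Gaussian elimination you allude to: order the summands reaching $n-1$ by their birth index in $\leqb$, and perform replacements of the form $v_j \mapsto v_j + \lambda v_k$ only when $b_k \leqb b_j$; then the modified $v_j$ still admits a representative sequence supported on $[b_j;n-1]$, because the added term can be propagated backward along the (no longer) interval of $v_k$ without leaving the support of $v_j$. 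After this reduction, each interval either extends to $n$ or stays at $n-1$, and a complement in $V_n$ supplies the remaining $\Imod[n;n]$ summands. Your Krull--Schmidt alternative is also perfectly valid and is closer in spirit to how Gabriel's theorem is usually presented; either argument would serve, and the paper expects neither from you.
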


For an $\Field$-representation $\Vmod = (V_i, f_i)_{i = 1 \ldots n}$
of an $A_n$-type quiver $\quiv$, and two integers $1\leq b \leq d\leq
n$, we define the {\em restriction} $\Vmod[b;d]$ to be the
representation $(V_i, f_i)_{i = b \ldots d}$ of the quiver
$\quiv[b;d]$ obtained by restricting $\quiv$ to the vertices (and
arrows between them) of indices $b \leq i \leq d$. If $b=1$ then we call
$\Vmod[1;d]$ a \emph{prefix} of $\Vmod$, and if $d=n$ then we call
$\Vmod[b;n]$ a \emph{suffix} of $\Vmod$.  The {\em restriction
principle}~\cite{DBLP:journals/focm/CarlssonS10} states that the
interval decomposition of  $\Vmod[b;d]$ is equal
to the direct sum of the intervals of the decomposition of $\Vmod$
restricted, as intervals, to $[b;d]$.

We finally define two total orders $\leqb$ and $\leqd$ on the
indices $\{1,\ldots ,n\}$ of the vertices of an $A_n$-type quiver,
depending on the orientation of the arrows: given $1\leq i < j\leq n$, 
\begin{align*}
i\leqb j &\ \mbox{if}\ \bullet_{j-1} \to \bullet_j\ \mbox{and}\ i\geqb j\ \mbox{otherwise};\\
i\leqd j &\ \mbox{if}\ \bullet_i \to \bullet_{i+1}\ \mbox{and}\ i\geqd j\ \mbox{otherwise}.
\end{align*}
%
%
For example, in the quiver:
$\ \ \ \ \ \ \xymatrix{\bullet_1 \ar@{->}[r] & \bullet_2 \ar@{->}[r] & \bullet_3 \ar@{<-}[r] & \bullet_4 \ar@{->}[r] & \bullet_5 \ar@{<-}[r] & \bullet_6 \\}\ \ \ \ \ \ $ the indices satisfy $6 \leqb 4 \leqb 1 \leqb 2 \leqb 3 \leqb 5$ and 
$1 \leqd 2 \leqd 4 \leqd 6 \leqd 5 \leqd 3$. 
Note that these orders are a reformulation of the \emph{birth-time}
and \emph{death-time indices}
of~\cite{DBLP:journals/focm/CarlssonS10}. Define $\maxb$ and $\maxd$
to be the maximum functions {\em w.r.t.} the orders $\leqb$ and
$\leqd$ respectively.

\vspace{0.3cm} 
\noindent
{\bf Vector spaces and representations.} Let $V$ be a
finite-dimensional vector space and let $\Basis = \{v_1, \ldots
,v_d\}$ be a basis for $V$. We denote by $\sprod \colon V \times V \to
\Field$ the scalar product associated to $\Basis$, that is: $\langle
v_i,v_j\rangle = \delta_{i,j}$ for any $1\leq i,j\leq d$, where
$\delta_{i,j}$ is the Kronecker symbol. This scalar product makes the
basis orthonormal.

Let $\Vmod = (V_i,f_i)$ be a representation of an $A_n$ type quiver. Let $v$ be a vector of vector space $V_k$. A {\em representative sequence} for $v$ is a collection $(x_1, \ldots, x_n\} \in V_1 \times \ldots \times V_n$ of one vector $x_i$ per vector space $V_i$ of the representation $\Vmod$ such that $v = x_k$ and, either $f_(x_i) = x_{i+1}$ if $\bullet_i \rightarrow \bullet_{i+1}$ is forward, or $x_i = f_i(x_{i+1})$ if $\bullet_i \leftarrow \bullet_{i+1}$ is backward. A representative sequence spans a submodule of $\Vmod$ that is isomorphic to an interval $\Imod[b;d]$, where the indices $\{b, \ldots,d\}$ are exactly the indices for which the $x_i$ are non-zero.

Finally, we 
say that a basis $\{v_1, \ldots , v_d\}$ of a vector space $V_k$ is \emph{compatible}
with the decomposition of $\Vmod$ if there exists, for every vector
$v_j$, a {\em representative sequence} for $v_j$ spanning a submodule that is a summand of $\Vmod$. This interval summand is said to be {\em attached} to $v_j$.

The notion of \emph{compatible} basis has been used in~\cite{DBLP:conf/soda/MariaO15} to maintain algorithmically an interval decomposition with only a vector space basis at a given index $k$.

\vspace{0.3cm} 
\noindent
{\bf Homology, cohomology and persistence theory.} Throughout the paper we use homology and cohomology with coefficients in a fixed field $\Field$, which turns the homology and cohomology groups into $\Field$-vector spaces. In the article, we denote respectively by $\Chain^*(\K)$ and $\Hom^*(\K)$ the direct sum of the cochain groups and the direct sum of the cohomology groups for all dimensions. We refer the reader e.g. to~\cite{Munkres-elementsalgtop1984} for an introduction to homology and cohomology, and to~\cite{DBLP:books/daglib/0025666} for an introduction to persistent homology.

\vspace{0.3cm} 
\noindent
{\bf Zigzag persistence algorithms.} 
There are currently two known approaches to compute zigzag persistent homology. The first one was introduced in~\cite{DBLP:journals/focm/CarlssonS10,DBLP:conf/compgeom/CarlssonSM09} and, similarly to the standard persistent homology algorithm, it maintains and updates a compatible homology basis for the following prefix of the zigzag filtration:

\begin{equation}
\xymatrix @+0.5pc{(\emptyset = \K_{1}) & \K_{2} \ar@{<->}[l] \ar@{<->}[r] 
& \cdots & \K_{i} \ar@{<->}[l]} 
\end{equation}

The homology basis is defined on the complex $\K_i$, and updates are
made under insertion or deletion of a simplex $\sigma$ (depending on
the orientation of the arrow $\K_i \stackrel{\sigma}{\longleftrightarrow}
\K_{i+1}$). This approach has been reinterpreted in terms of a
sequence of matrix multiplications, and its theoretical complexity has
been reduced from cubic to matrix-multiplication time, although this
variant is not
practical~\cite{DBLP:conf/compgeom/MilosavljevicMS11}.

Recently, the authors~\cite{DBLP:conf/soda/MariaO15}  introduced a new algorithm maintaining a compatible basis for the following zigzag filtration:

\vspace{-0.3cm}

\begin{equation}
\xymatrix @-0.1pc{(\emptyset = \K_{1}) & \K_{2} \ar@{<->}[l] \ar@{<->}[r] 
& \cdots & \K_{i} \ar@{<->}[l] & \ar@{<->}[l]_-{\sigma} \K'_{i+1}  
& \cdots \ar@{<->}[l] & \K'_{i+m-1} \ar@{<->}[l]  \ar@{<->}[r] & (\K'_{i+m} = \emptyset)\\ }
\end{equation}

\vspace{-0.3cm}

\noindent
where the first $i$ complexes form the $i$-th prefix of the input
zigzag filtration, and the remaining part of the sequence 
consists of a succession of simplex removals,
in an arbitrary order. Here, the homology basis is not defined at the
end, but on complex $\K_{i}$ where the insertion or removal happens. 
It is maintained under the following two
local transformations of the zigzag filtration (passing from the bottom to the top zigzag, the rest of the zigzag remaining unchanged):

\begin{minipage}{0.45\textwidth}
\begin{equation}\label{eq:filtration_diamond1}
\xymatrix @=20pt @R-2pc{
&& \K\cup\{\sigma\}&&   \\
\cdots \ar@{<->}[r] & \K \ar@{->}[dr]_-{\idcat} \ar@{->}[ur]^-{\sigma} 
&  & \K \ar@{->}[dl]^-{\idcat} \ar@{->}[ul]_-{\sigma} \ar@{<->}[r] & \cdots \\
&&\K&&  \\} \ \ \ \ 
\end{equation}
\end{minipage}
%
%
\begin{minipage}{0.45\textwidth}
\begin{equation}\label{eq:filtration_diamond2}
\xymatrix @=20pt @R-2pc{
&&\K\cup\{\tau\}\\
\cdots \ar@{<->}[r] & \K\cup\{\sigma, \tau\} \ar@{<-}[ur]^-{\sigma} \ar@{<-}[dr]_-{\tau} & & 
\K \ar@{->}[ul]_-{\tau}\ar@{->}[dl]^-{\sigma} \ar@{<->}[r] & \cdots  \\
&&\K\cup\{\sigma\}&&\\}
\end{equation}
\end{minipage}

\bigskip

Here, the homology basis needs to be defined, respectively, on $\K$ and $\K \cup \{\sigma,\tau\}$. In~\cite{DBLP:conf/soda/MariaO15}, the authors show that these transformations are enough to implement a zigzag persistence homology, and that the algorithm shows promising experimental performance. Note that in standard persistent homology, we do not have the freedom of considering different transformations of the filtration, and there is essentially one algorithm in that sense.

In this article, we build upon this last algorithm to construct an efficient zigzag persistent {\em cohomology} implementation.

\vspace{0.3cm} 
\noindent
{\bf Back to quivers: Diamond principles.} Applying the homology functor to the above reflection and transposition diagrams~(\ref{eq:filtration_diamond1}) and~(\ref{eq:filtration_diamond2}), we get the following diagrams of vector spaces and linear maps, called {\em diamonds}:\\
\begin{minipage}{0.45\textwidth}
\begin{equation}\label{eq:representation_diamond1}
\xymatrix @=20pt @R-2pc{
\Wmod := && W &&   \\
\cdots \ar@{<->}[r] & V \ar@{->}[dr]_-{\idcat} \ar@{->}[ur]^-{f} 
&  & V \ar@{->}[dl]^-{\idcat} \ar@{->}[ul]_-{f} \ar@{<->}[r] & \cdots \\
\Vmod := && V &&  \\} \ \ \ \ 
\end{equation}
\end{minipage}
%
%
\begin{minipage}{0.45\textwidth}
\begin{equation}\label{eq:representation_diamond2}
\xymatrix @=20pt @R-2pc{
\Wmod := && W \\
\cdots \ar@{<->}[r] & U' \ar@{<-}[ur]^-{d} \ar@{<-}[dr]_-{c} & & 
U \ar@{->}[ul]_-{b}\ar@{->}[dl]^-{a} \ar@{<->}[r] & \cdots  \\
\Vmod := && V &&\\}
\end{equation}
\end{minipage}\\

\bigskip

\noindent where the vector spaces $U,U',V,W$ and linear maps $f,a,b,c,d$ are homology groups and maps between them, with corank or nullity $1$. The fact that the maps are of corank and nullity $1$ in zigzag persistence~\cite{DBLP:journals/focm/CarlssonS10,DBLP:conf/soda/MariaO15} implies the following properties. The interval decomposition of a zigzag module satisfies, 
\begin{enumerate}
\item[-] for every index $i > 1$, there is at most one interval with birth $i$,
\item[-] for every index $j < n$, there is at most one interval with death $j$.
\end{enumerate}
We assume these properties for the remaining of the article.

The algorithm of~\cite{DBLP:conf/soda/MariaO15} updates the direct sum decomposition of the zigzag module, along with a compatible basis, when passing from the bottom module $\Vmod$ to the top module $\Wmod$ of one of these diamonds. 
The update is driven by several {\em diamond principles}:

\begin{theorem}[Reflection Diamond Principle, sketch~\cite
{DBLP:conf/soda/MariaO15}]
Let $\Vmod$ and $\Wmod$ be respectively the bottom and top zigzag modules in diagram~(\ref{eq:representation_diamond1}), where the reflection happens at quiver index~$i$.

If $f$ is injective of corank $1$, then $\Wmod \cong \Vmod \oplus \Imod[i;i]$. If $f$ is surjective of nullity $1$, then there is a decomposition of $\Vmod \cong \Xmod \oplus \Imod[b;d] \oplus \Imod[b',d'] \oplus \Umod$ s.t. $\Wmod \cong \Xmod \oplus \Imod[b,i-1] \oplus \Imod[i+1,d'] \oplus \Umod'$ where:
\begin{enumerate}
\item $\Imod[b,i-1]$ and $\Imod[i+1,d']$ are a ``cut-out'' version of intervals $\Imod[b;d]$ and $\Imod[b';d']$ (potentially the same interval of the decomposition of $\Vmod$),
\item if $\Umod$ admits $p$ intervals in its decomposition, with birth indices $\{b_1, \ldots, b_p\}$ and death indices $\{d_1,\ldots d_p\}$, then $\Umod'$ admits $p+1$ intervals in its decomposition, with birth indices $\{b', b_1, \ldots, b_p\}$ and death indices $\{d, d_1, \ldots, d_p\}$ paired according to the so-called {\em greedy rule}.
\end{enumerate}
\end{theorem}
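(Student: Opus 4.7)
The plan is to treat the injective and surjective cases separately and use the restriction principle to localize the analysis to a neighborhood of index~$i$. Every interval summand of $\Vmod$ whose restriction to that neighborhood is trivial is unaffected by passing to $\Wmod$; these summands form a common direct-sum component that will be absorbed into $\Xmod$. What remains is to understand the intervals of $\Vmod$ whose representative sequences are nonzero at index~$i$.

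For the injective case, pick $w \in W$ with $W = f(V) \oplus \langle w \rangle$. Because $w$ is not in the image of $f$ from either side, placing $\langle w \rangle$ at index~$i$ and $0$ elsewhere yields a submodule of $\Wmod$ isomorphic to $\Imod[i;i]$. Its complement replaces $W$ at index~$i$ by $f(V) \cong V$ and is isomorphic to $\Vmod$ via the identification induced by $f$. This gives $\Wmod \cong \Vmod \oplus \Imod[i;i]$.

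For the surjective case, let $\langle v_0 \rangle = \ker f$. In a basis of $V$ compatible with the decomposition of $\Vmod$, expand $v_0 = \sum_j \alpha_j v_j$, summing over basis vectors attached to intervals passing through index~$i$. Among the indices $j$ with $\alpha_j \neq 0$, select $\Imod[b;d]$ to be the summand whose birth is $\leqb$-maximal and $\Imod[b';d']$ to be the summand whose death is $\leqd$-maximal; this is precisely the prescription of the greedy rule. Intuitively, these two representatives survive longest on either side of the diamond, so quotienting by $v_0$ at index~$i$ forces them to become the cut-out intervals $\Imod[b;i-1]$ and $\Imod[i+1;d']$ of $\Wmod$. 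The remaining intervals, together with the newly dangling endpoints $b'$ and $d$, are re-paired by the greedy rule on the index sets $\{b', b_1, \ldots, b_p\}$ and $\{d, d_1, \ldots, d_p\}$ to form $\Umod'$.

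The main obstacle is to realize the decomposition of $\Wmod$ by explicit representative sequences. For the cut-out intervals the construction is immediate: truncate the original representative sequences of $\Imod[b;d]$ and $\Imod[b';d']$ at index~$i$, using the fact that on the $\Wmod$ side everything at~$i$ is taken modulo $v_0$. For the intervals of $\Umod'$, one must construct representative sequences in $\Wmod$ as linear combinations of original representatives of $\Vmod$, chosen so that their image under $f$ is nonzero and their extension matches the greedy pairing. The combinatorial core is to show that such linear combinations always exist, and this is where the $\leqb$- and $\leqd$-maximality of $b$ and $d'$ enters: it guarantees that every other interval passing through~$i$ admits a representative extending at least as far on its own side. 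Once this is in hand, Gabriel's theorem identifies the resulting summands as the claimed interval representations.
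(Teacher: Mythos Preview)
The paper does not prove this theorem. It is quoted in sketch form in the body and restated in full in Appendix~A, both times with a citation to~\cite{DBLP:conf/soda/MariaO15}; no argument is reproduced here. So there is no in-paper proof against which to compare your proposal.

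Evaluating your proposal on its own: the injective case is complete and correct. Choosing $w$ with $W=f(V)\oplus\langle w\rangle$, the submodule supported only at index~$i$ by $\langle w\rangle$ is $\Imod[i;i]$, and the complementary submodule (with $f(V)$ at index~$i$) is isomorphic to $\Vmod$ via $f^{-1}$ at index~$i$ and the identity elsewhere; the commuting-square check is immediate.

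The surjective case is correctly oriented but incomplete precisely where you flag it. You identify the kernel generator, expand it in a compatible basis, and single out the $\leqb$-maximal birth and $\leqd$-maximal death contributors as the intervals to be cut; this matches the roles of $b_{\ell_p}$ and $d_p$ in the full statement. What is missing is the actual construction of the representative sequences realising the greedy-paired intervals $\Imod[b_{\ell_j};d_j]$ as summands of $\Wmod$, together with a verification that they are linearly independent after passing through~$f$. Invoking Gabriel's theorem at the end does not close this gap: Gabriel guarantees that \emph{some} interval decomposition of $\Wmod$ exists, not that it coincides with the greedy pairing. One must exhibit, for each~$j$, a specific combination $v_j-\tfrac{\alpha_j}{\alpha_{j_0}}v_{j_0}$ (for a suitably chosen~$j_0$) whose representative sequence has support exactly $[b_{\ell_j};d_j]$ in $\Wmod$. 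The existence of such a~$j_0$ at every step is the content of a lemma analogous to Lemma~\ref{lem:lemma_indices_pairing_algorithm} in this paper (proved here only for the \emph{reversed} diamond); without it, your argument remains an outline.
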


We give a full version of the theorem, including a description of the greedy rule (Algorithm~\ref{alg:greedy-rule}), in Appendix~\ref{app:diamonds}. 

In the same spirit, we can define a diamond principle for transpositions~(\ref{eq:representation_diamond2}). This principle includes a case study of the injectivity and surjectivity of maps $a,b,c,d$ and is described in Appendix~\ref{app:diamonds}. 



\section{Diamond Principles for Reversed Reflections and Transpositions}
\label{sec:diamonds}

Applying the {\em cohomology} functor~\cite{Munkres-elementsalgtop1984} to diagrams~(\ref{eq:filtration_diamond1}) and~(\ref{eq:filtration_diamond2}) leads to diagrams of vector spaces (here cohomology groups) where all arrows are reversed:

\begin{minipage}{0.45\textwidth}
\begin{equation}\label{eq:representation_diamond3}
\xymatrix @=20pt @R-2pc{
\Wmod := && W &&   \\
\cdots \ar@{<->}[r] & V \ar@{<-}[dr]_-{\idcat} \ar@{<-}[ur]^-{g} 
&  & V \ar@{<-}[dl]^-{\idcat} \ar@{<-}[ul]_-{g} \ar@{<->}[r] & \cdots \\
\Vmod := && V &&  \\} \ \ \ \ 
\end{equation}
\end{minipage}
%
%
\begin{minipage}{0.45\textwidth}
\begin{equation}\label{eq:representation_diamond4}
\xymatrix @=20pt @R-2pc{
\Wmod := && W \\
\cdots \ar@{<->}[r] & U' \ar@{->}[ur]^-{d} \ar@{->}[dr]_-{c} & & 
U \ar@{<-}[ul]_-{b}\ar@{<-}[dl]^-{a} \ar@{<->}[r] & \cdots  \\
\Vmod := && V &&\\}
\end{equation}
\end{minipage}\\

Diagram~(\ref{eq:representation_diamond4}) is the mirror image of diagram~(\ref{eq:representation_diamond2}) and the same principle for transposition diamonds applies. Differently, we need to introduce a new diamond principle for the \emph{reversed reflection diamond}~(\ref{eq:representation_diamond3}).

\begin{theorem}[Reversed Reflection Diamond Principle, full statement]
\label{thm:reversed_d_princ}
\noindent
\emph{1.} If $g$ is surjective of nullity $1$
in~(\ref{eq:representation_diamond3}), then we have
\[ \Wmod \cong \Vmod \oplus \Imod[i;i]. \]

\noindent
\emph{2.} If $g$ is injective of corank $1$, then let $\{v_1, \ldots,
v_d\}$ be a basis of $V$ that is compatible with the decomposition of
$\Vmod$, let $\langle \cdot,\cdot \rangle$ be the associated scalar
product and denote by $b_j$ and $d_j$ the birth and death of the
interval summand attached to $v_j$. Let $(\im g)^\bot$ be generated by
$\xi \neq 0$. Up to a reordering of the indices, write
\[ \xi = a_1 v_1 + \cdots + a_p v_p \ , \ \ (\im g)^\bot = \langle \xi \rangle \]
with $a_j \neq 0$ for $1 \leq j \leq p$ and $d_1 \geqd \ldots \geqd d_p$. 
Let $b_{\ell_p} = \minb\{b_j\}_{j = 1, \ldots, p}$ and $d_p = \mind\{d_j\}_{j = 1, \ldots, p}$. Then we have
\[
\begin{array}{lcl}
\Vmod\ &\cong& \ \Xmod\ \oplus \ \bigoplus_{1\leq j \leq p} \Imod[b_j;d_j] \ \ \ \ \ \ 
\text{and} \\[5pt]
\Wmod\ &\cong& \ \Xmod\ \oplus\ \Imod[b_{\ell_p};i-1]\ 
                        \oplus\ \Imod[i+1;d_p]\ 
                        \oplus\   \bigoplus_{j=1}^{p-1} \Imod[b_{\ell_j};d_j]\\
\end{array}
\]
where the pairing $(b_{\ell_{j}},d_j)_{1\leq j\leq p-1}$ is computed
according to the following greedy rule (assuming $b_{\ell_p}$ and $d_p$ are considered as already ``paired''):

\vspace{-0.5cm}

\begin{algorithm}
\For{$j$ from $1$ to $p-1$}{
    \lIf{$b_j$ not yet paired}{
      $b_{\ell_j} \leftarrow b_j$; \ \ \ pair $b_{\ell_j}$ with $d_j$}
    \lElse{
      $b_{\ell_j} \leftarrow \displaystyle\minb_{k= 1,\ldots,p \ \ \ } \left\{b_k : b_k \text{ not yet paired}\right\}$;
      \ \ \ pair $b_{\ell_j}$ with $d_j$}
      }
\caption{Pairing for Reversed Reflection Diamond}
\label{alg:greedy-rule_rev}
\end{algorithm}
\end{theorem}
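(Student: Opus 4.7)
Our proof adapts the argument for the (homology) Reflection Diamond Principle from~\cite{DBLP:conf/soda/MariaO15} to the cohomology setting, where the central arrows are reversed. By the restriction principle we may work locally at positions $i-1, i, i+1$. A useful observation: every interval summand attached to a basis vector of the central space in $\Vmod$ or $\Wmod$ necessarily covers position $i$, so the count of such intervals equals $\dim V$ or $\dim W$ respectively.

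\textbf{Case 1 ($g$ surjective of nullity $1$).} Pick $\xi$ generating $\ker g \subseteq W$. Since both outgoing maps from position $i$ in $\Wmod$ are copies of $g$, the sequence carrying $\xi$ at position $i$ and zero elsewhere is a valid representative sequence, generating an $\Imod[i;i]$ summand. Any linear section $s \colon V \to W$ of $g$ identifies $V$ with a complement of $\ker g$, producing an inclusion $\Vmod \hookrightarrow \Wmod$ compatible with the identity maps on the sides of $i$. The quotient is exactly the $\Imod[i;i]$ summand, giving $\Wmod \cong \Vmod \oplus \Imod[i;i]$.

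\textbf{Case 2 ($g$ injective of corank $1$).} The covector $\xi$ characterises $\im g$ by $v \in \im g$ iff $\langle v, \xi \rangle = 0$. The basis vectors $v_{p+1}, \ldots, v_d$ (those with $a_k = 0$) lie in $\im g$ and lift to $W$ with unchanged attached intervals, contributing to the $\Xmod$ summand. For $v_1, \ldots, v_p$ we build, greedily following Algorithm~\ref{alg:greedy-rule_rev}, a compatible basis of the $(p-1)$-dimensional subspace $\langle v_1, \ldots, v_p \rangle \cap \im g$ whose lifts to $W$ realise the intervals $\Imod[b_{\ell_j}; d_j]$ for $j = 1, \ldots, p-1$. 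Concretely, processing $j = 1, \ldots, p-1$ in decreasing-$\leqd$ order of $d_j$, we replace $v_j$ by $v_j' = v_j + \sum_{k > j} \lambda_k v_k$, choosing the $\lambda_k$ so that $v_j' \in \im g$, and lift $v_j'$ to $w_j = g^{-1}(v_j') \in W$.

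The main technical obstacle is verifying that $v_j'$'s representative sequence in $\Vmod$ spans exactly the interval $[b_{\ell_j}; d_j]$, so that $w_j$ generates an interval summand in $\Wmod$ with the claimed birth and death. The right extent equals $d_j$ because processing in decreasing-$\leqd$ order ensures $d_j$ is $\leqd$-maximal among $\{d_k\}_{k \geq j}$, and no cancellation occurs thanks to the uniqueness of deaths at any given index. The left extent equals $\min_{\leqb} (\{b_j\} \cup \{b_k : \lambda_k \neq 0\})$, and the greedy rule enforces this to equal $b_{\ell_j}$ by choosing $b_{\ell_j}$ as the $\leqb$-minimum among unpaired births at step $j$. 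Finally, the residual birth $b_{\ell_p}$ and death $d_p$---the unique ones not consumed by the $v_j'$'s---yield the split $\Imod[b_{\ell_p}; i-1] \oplus \Imod[i+1; d_p]$ in $\Wmod$: the residual representative sequence starting at $b_{\ell_p}$ in $\Vmod$ terminates at position $i-1$ in $\Wmod$ (since continuing across $i$ would require lifting a vector outside $\im g$), and symmetrically on the right.
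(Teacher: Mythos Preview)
The paper's proof is much shorter and takes a different route from yours. For Case~1 it observes that the square with $g$ surjective and $\idcat_V$ injective is an \emph{exact} diamond, and invokes the Exact Diamond Principle (Theorem~\ref{thm:zz_exactdiamondprinciple}) directly. For Case~2 it simply \emph{dualises} diagram~(\ref{eq:representation_diamond3}): passing to dual spaces reverses every arrow, turning the reversed reflection diamond into a standard one with $f=g^*$ surjective of nullity~$1$ and $\ker f=\langle\xi^*\rangle$; the original Reflection Diamond Principle from~\cite{DBLP:conf/soda/MariaO15} then applies verbatim. Because dualising reverses the orders $\leqb$ and $\leqd$, the $\maxb$/$\maxd$ in that principle become the $\minb$/$\mind$ appearing in Algorithm~\ref{alg:greedy-rule_rev}. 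No new combinatorics is needed.

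Your Case~1 is correct (the explicit section/retraction argument is a fine alternative to the Exact Diamond Principle). Your Case~2, however, has a genuine gap. The assertion ``the left extent equals $\min_{\leqb}(\{b_j\}\cup\{b_k:\lambda_k\neq 0\})$'' is false: if $v$ and $w$ have representative sequences spanning $\Imod[b;d]$ and $\Imod[b';d']$ with $b<b'$ as integers (both $\leq i$), the sum spans $\Imod[\min(b,b');\,\cdot\,]=\Imod[b;\,\cdot\,]$, and $b=\maxb\{b,b'\}$, not $\minb$. Indeed, since $[b';d']$ is the span of a representative sequence with $b'>1$, the arrow $\bullet_{b'-1}\leftrightarrow\bullet_{b'}$ must be backward (otherwise the sequence cannot vanish at $b'-1$ while being nonzero at $b'$), so by definition $b\geqb b'$. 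This is exactly the rule the paper states and uses later in the proof of its Lemma~3. Your mis-identification of the birth as $\minb$ is what makes the link to Algorithm~\ref{alg:greedy-rule_rev} look immediate; once corrected, that link is no longer obvious.

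There is a second gap: you write $v_j'=v_j+\sum_{k>j}\lambda_k v_k$ with the $\lambda_k$ ``chosen so that $v_j'\in\im g$'', but $\im g$ has codimension~$1$, so this is one linear condition on $p-j$ unknowns and leaves a $(p-j-1)$-parameter family of choices; different choices yield different births for $v_j'$. A correct direct argument must specify the reduction (as the paper does in Algorithm~\ref{alg:greedy-rule-reversed-diamond}, subtracting a \emph{single} carefully chosen $\alpha_{j_0}$ or $\alpha_{\ell_j}$) and then prove a non-trivial combinatorial lemma (the paper's Lemma~\ref{lem:lemma_indices_pairing_algorithm}) guaranteeing that the resulting $\maxb$-births line up with the $\minb$-greedy pairing. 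The duality proof bypasses all of this.
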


\vspace{-1cm}

\noindent
\begin{minipage}{0.8\textwidth}
\noindent
{\it Proof.} {\bf 1.} 
By surjectivity of $g$ and injectivity of $\idcat_V$, the diamond on the right 
is exact (defined in Appendix~\ref{app:diamonds}). The result follows then from the exact diamond principle and the hypothesis on the interval decomposition made at the beginning of the section. 
\end{minipage}
\begin{minipage}{0.2\textwidth}
\[\xymatrix{
    V \ar[r]^-{\idcat_V} & V\\
    W \ar[u]^-{g} \ar[r]^-{g} & V \ar[u]_-{\idcat_V}\\}\]
\end{minipage}

\noindent
{\bf 2.} We dualise diagram~(\ref{eq:representation_diamond3}), which reverses arrows and dualises vector spaces, while maintaining the same interval decomposition (in the quiver with reversed arrows): 
\begin{equation}
\label{eq:dual_reversed_reflection_diamond}
\xymatrix @R-1pc @-0.7pc{
\Wmod^* \defeq \hspace{-1cm}& & & & \W^* & & & \\
        &\V^*_1 & \cdots \ar@{<->}[l] \ar@{<->}[r] & \V^* \ar@{->}[ru]^f \ar@{->}[rd]_{\idcat} & & \V^* 
       \ar@{->}[lu]_f \ar@{->}[ld]^{\idcat} & \cdots \ar@{<->}[l] \ar@{<->}[r] & \V^*_n \\
\Vmod^* \defeq \hspace{-1cm}& & & & \V^* & & & }
\end{equation}
By duality, $\Vmod \cong \Vmod^*$ and 
$\Wmod \cong \Wmod^*$ and $f: V^* \to W^*$ is the transpose of $g: W\to V$, and the interval decomposition of the primal zigzag module is the same as the one of the dual. In particular, 
$\ker f = \langle \xi^* \rangle$, where $\xi^* = \alpha_1 v^*_1 + \ldots + \alpha_p v^*_p$. The result follows then by applying the reflection diamond principle (Appendix~\ref{app:diamonds}) to this 
dual diagram. Note that orders $\leqb$ and $\leqd$ are reversed by taking the 
dual, because arrows in the corresponding quiver are reversed. \hfill $\Box$



\section{Zigzag Persistent Cohomology Algorithm}
\label{sec:zz_cohomology}
%
Given an input zigzag filtration: $\xymatrix @+0.5pc{(\emptyset =) \ \ \K_{1} & \K_{2} \ar@{<->}[l] \ar@{<->}[r] 
& \cdots & \K_{n-1} \ar@{<->}[l] & \ar@{<->}[l] \K_{n}\\}
$
presented to us through an oracle providing the sequence of simplex insertions and deletions. We compute its persistence. At step $i$ of the algorithm, we maintain a compatible cohomology basis for the persistence module of a zigzag filtration:
\begin{equation}\label{eq:sec_cohomology_filt_input_stepi}
\xymatrix @+0.5pc{(\emptyset =) \ \ \K_{1} \ar@{<->}[r] 
& \cdots & \ar@{<->}[l] \K_{i} 
& \ar@{->}[l]_-{\tau_1} \K'_{i+1} & \ar@{->}[l]_-{\tau_2} \K'_{i+2} 
& \ar@{->}[l]_-{\tau_3} \cdots & \ar@{->}[l]_-{\tau_{m}} \K'_{i+m} \ \ (= \emptyset) \\}
\end{equation}
where the $i$th prefix of the filtrations~(\ref{eq:sec_cohomology_filt_input_stepi}) is identical to the one of the input filtration, and the suffix from indices $i+1$ to $i+m$ in the filtration~(\ref{eq:sec_cohomology_filt_input_stepi}) consists the removal of all the simplices of $\K_i$ one by one, in an arbitrary order; here, the complex $\K_i$ contains exactly $m$ simplices. 

At iteration $i$, the (compatible) cohomology basis is defined on the complex $\K_i$. We represent the basis by an $m \times m$ matrix $\Matrix$, having one column per simplex of $\K_i$, and whose rows form a basis for the cochain groups of all dimensions of $\K_i$. Additionally, we maintain a partition $F \sqcup G \sqcup H$ of the indices $\{1, \ldots, m\}$ and a pairing $G \leftrightarrow H$ satisfying the following conditions:
\begin{itemize}
\item[$\bullet$] the restriction of the cochains $\alpha_\ell$, $\ell \in \{1,\ldots,m\}$, to the simplices of a subcomplex $\K'_{i+j}$ of $\K_i$ is a basis of $\Chain^*(\K'_{i+j})$,
\item[$\bullet$] $\delta \alpha_f = \delta \alpha_h = 0$ for any $f \in F$ and $h \in H$,
\item[$\bullet$] $\delta \alpha_g = \alpha_h$ for any two indices $g \in G$ and $h \in H$that are  paired together.
\end{itemize}
Note that the $\alpha_f$, $f\in F$, are the representatives of a cohomology basis, and the $\alpha_h$, $h \in H$, form a coboundary basis. This partition and associated pairing are already a feature of the cohomology algorithm for standard persistence~\cite{DBLP:journals/dcg/SilvaMV11}, except that the standard persistence implementation can be simplified by maintaining only cochains with an index in $F$. This is not possible in zigzag persistence so far. In the following, we describe the update of the cohomology basis at the level of cochains. Cochain operations like addition translate directly into row operations on the matrix $\Matrix$.

In~\cite{DBLP:conf/soda/MariaO15}, the authors describe how to compute the zigzag persistence of a filtration by applying reflections~(\ref{eq:filtration_diamond1}) and transpositions~(\ref{eq:filtration_diamond2}). In order to compute the zigzag persistent cohomology of a filtration, it is then sufficient to design an algorithm that updates a cohomology basis, which is compatible with the zigzag module, under reflections and transpositions.


\vspace{0.3cm}
\noindent
{\bf Implementation of the zigzag cohomology algorithm.} 
Consider the following operation of the filtration, where the reflection happens at index $i$:
\begin{equation}\label{eq:zz_arrowreflectiondiagram}
\begin{gathered}
\xy
\xymatrix @R-1.3pc @C+0.7pc @-0.6pc{
                     &                            & \K_i \cup \{\sigma\}                    &                 &&\\
\K_{1} \cdots \ar@{<->}[r] & \K_{i} \ar@{->}[dr]_-{\idcat} \ar@{->}[ur]^-{\sigma} 
&  & \K_{i} \ar@{->}[dl]^-{\idcat} \ar@{<-}[r]^-{\tau_1} \ar@{->}[ul]_-{\sigma}& 
   \K'_{i+1} \ar@{<-}[r]^-{\tau_{2}} & \cdots \emptyset \\
&&\K_{i}&&&\\
   }
\endxy
\end{gathered}
\end{equation}
in a zigzag filtration where two copies of $\K_i$ have been inserted at index $i$, which does not change the interval decomposition of the corresponding zigzag module, up to a translation of indices. It induces a reversed reflection diamond at the cohomology level:
\begin{equation}\label{eq:zz_arrowreflectiondiagram_13}
\begin{gathered}
\xy
\xymatrix @R-1.3pc @C+0.6pc @-0.6pc{
        &    & \Hom^*(\K_i \cup \{\sigma\})              &                 &&\\
\Hom^*(\K_{1}) \cdots \ar@{<->}[r] & \Hom^*(\K_{i}) \ar@{<-}[dr]_-{\idcat} 
\ar@{<-}[ur]^-{\sigma^*} 
&  & \Hom^*(\K_{i}) \ar@{<-}[dl]^-{\idcat} \ar@{->}[r]^-{\tau^*_1} \ar@{<-}[ul]_-{\sigma^*}& 
   \Hom^*(\K'_{i+1}) \ar@{->}[r]^-{\tau^*_{2}} & \cdots 0 \\
&&\Hom^*(\K_{i})&&&\\
   }
\endxy
\end{gathered}
\end{equation}
Here, $\sigma^*$ refers to both to the application induced at the cohomology level by the insertion of $\sigma$, and to the cocycle of the cochain group $\Chain^*(\K'_{i+1})$ that satisfies 
$\sigma^*(\sigma) = 1$ and $\sigma^*(\tau) = 0$ otherwise. 
We are given a cohomology basis $\{[\alpha_1], \ldots ,[\alpha_d]\}$ of 
$\Hom^*(\K_i')$ that is compatible with the decomposition of the bottom zigzag module in~(\ref{eq:zz_arrowreflectiondiagram_13}). Define, for any index $\ell \in \{1, \ldots, n\}$, the quantity $c_\ell = \delta \alpha_\ell (\sigma)$.

\vspace{0.3cm}
\indent
{\bf {\it (i) If $\sigma^*$ is surjective}}, then we update the basis to 
$\{[\alpha_1], \ldots ,[\alpha_d], [\sigma^*]\}$ by extending naturally each 
$\alpha_j$ such that $\alpha_j (\sigma) = 0$. 

\vspace{0.3cm}
\indent
{\bf {\it (ii) If $\sigma^*$ is injective}}, then we translate the reverse reflection diamond principle in terms of basis update. Let $b_j$ and $d_j$ be the birth and death of the interval summand attached to $[\alpha_j]$. Suppose, up to a reordering of the indices, that $c_1, \ldots , c_p \neq 0$ and $c_j = 0$ for every $j > p$, and $d_1 \geqd \cdots \geqd d_p$. 
Let $b_{\ell_p} = \minb\{b_j\}_{j = 1, \ldots, p}$ and $d_p = \mind\{d_j\}_{j = 1, \ldots, p}$. We compute the cocycles $\{\alpha'_1, \ldots ,\alpha'_{p-1}, \alpha_{\text{left}}, \alpha_{\text{right}}\}$ defined on $\K_{i} \cup \{\sigma\}$ satisfying:
\begin{itemize}
\item[$\bullet$] $\alpha'_j$ is attached to the interval summand $\Imod[b_{\ell_j};d_j]$ of the top module in~(\ref{eq:zz_arrowreflectiondiagram_13}), 
\item[$\bullet$] $\alpha_{\text{left}}$ and $\alpha_{\text{right}}$ are respectively attached to the interval summands $\Imod[b_{\ell_p};i-1]$ and $\Imod[i+1;d_p]$.
\end{itemize}

We compute the new basis following the algorithm of the reversed reflection diamond principle (Theorem~\ref{thm:reversed_d_princ}): first, we set $\alpha_{\text{right}} \colon = \alpha_p$ and $\alpha_{\text{left}} = \alpha_{\ell_p}$. Second, we apply the procedure:


\begin{algorithm}
\For{$j$ from $1$ to $p-1$}{
    \lIf{$b_j$ not yet paired}{
      set $\alpha'_j \defeq \alpha_j - \frac{c_j}{c_{j_0}} \alpha_{j_0}$, such that 
      $b_{j_0} \leqb b_j$ and $d_{j_0} \leqd d_j$}
    \lElse{
      $b_{\ell_j} \leftarrow \displaystyle\minb_{k= 1,\ldots,p \ \ \ } \left\{b_k : b_k \text{ not yet paired}\right\}$;\ \ \
      $\alpha'_j \defeq \alpha_j - \frac{c_j}{c_{\ell_j}} \alpha_{\ell_j}$
        }
      }
\caption{Pairing for Reversed Reflection Diamond}
\label{alg:greedy-rule-reversed-diamond}
\end{algorithm}

\vspace{0.3cm}

\noindent
{\bf Transposition diamonds.} The basis update for the transposition diamond is simpler, and is explicitly described in the statement of the Transposition Diamond Principle in Appendix~\ref{app:diamonds}. The basis updates follows directly the update of vectors $u$, $v$ and $v+ \gamma \cdot u$ in the case study of the theorem. This is merely a dualized version of the basis update for transposition diamonds introduced in~\cite{DBLP:conf/soda/MariaO15}.

\vspace{0.3cm}
\noindent
{\bf Correction of the algorithm.} Note that the update described in the case $\sigma^*$ surjective directly leads to a cohomology basis of $\Hom^*(\K_i \cup \{\sigma\})$ that is compatible with the interval decomposition of the top zigzag module in~(\ref{eq:zz_arrowreflectiondiagram_13}). This comes straight from the fact that, by surjectivity of $\sigma^*$, the only representative sequence for $[\sigma^*]$ (up to multiplication by a scalar) is $(0, \ldots, 0, [\sigma^*], 0, \ldots, 0)$, and it spans a summand isomorphic to $\Imod[i;i]$. By virtue of Theorem~\ref{thm:reversed_d_princ}, this fulfills the decomposition of the top module of diagram~(\ref{eq:zz_arrowreflectiondiagram_13}).

The case $\sigma^*$ injective requires more work. We start by introducing basic properties of the cohomology basis of a complex under an elementary inclusion.

\begin{lemma}
\label{lem:elementary_prop_inclusion_cohomology}
Let $\xymatrix{\K \ar[r]^-\sigma & \K \cup \{\sigma\}}$ be an elementary 
inclusion and let $g: \Hom^*(\K \cup \{\sigma\}) \to \Hom^*(\K)$ be the morphism 
induced at the cohomology level by this elementary inclusion. 
Let $\{ [\alpha_i] \}_{i \in I}$ be any basis for 
$\Hom^*(\K)$ and $\langle \cdot,\cdot \rangle$ be the associated scalar product. 
Define, for every index $i$, $c_i = \delta \alpha_i (\sigma) = \alpha_i(\partial \sigma)$. 
\begin{enumerate}[(i)]
\item if all $c_i = 0$, then $g$ is surjective of nullity $1$, otherwise $g$ is injective of corank $1$,
\item if there is a $c_{i_0} \neq 0$, then any cohomology class $[\alpha_j - \frac{c_j}{c_{i_0}} \alpha_{i_0}]$ belongs to $\im g$, for $j \neq i_0$,
\item if $g$ is injective, then $(\im g)^\bot$ is generated by the cohomology class $[\sum_{j \in I}\alpha_j - \frac{c_j}{c_{i_0}} \alpha_{i_0}]$.
\end{enumerate}
\end{lemma}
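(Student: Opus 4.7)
\noindent\emph{Proof plan.}
My plan is to work at the cochain level, exploiting the decomposition $\Chain^*(\K \cup \{\sigma\}) = \Chain^*(\K) \oplus \Field\cdot\sigma^*$. The preliminary observation is that, because $\sigma$ is maximal in $\K \cup \{\sigma\}$, no simplex contains $\sigma$ in its boundary, so $\delta\sigma^* = 0$; furthermore, for any cochain $\bar\alpha \in \Chain^*(\K\cup\{\sigma\})$ one has $\delta\bar\alpha(\tau) = \delta(\bar\alpha|_\K)(\tau)$ for $\tau \in \K$, together with $\delta\bar\alpha(\sigma) = \bar\alpha(\partial\sigma) = \bar\alpha|_\K(\partial\sigma)$ since $\partial\sigma \in \Chain_*(\K)$. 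It follows that a cocycle $\alpha \in \Chain^*(\K)$ extends to a cocycle of $\K\cup\{\sigma\}$ iff $\alpha(\partial\sigma) = 0$, and the value $\alpha(\partial\sigma)$ is invariant under addition of any coboundary $\delta\gamma$ (because $\delta\gamma(\partial\sigma) = \gamma(\partial^2\sigma) = 0$), hence is well-defined on cohomology. Expanding any class as $\sum_j a_j[\alpha_j]$ then yields
\[
\im g \ =\ \bigl\{\ \bigl[\textstyle\sum_j a_j \alpha_j\bigr]\ :\ \textstyle\sum_j a_j c_j = 0\ \bigr\},
\]
which is all of $\Hom^*(\K)$ when every $c_i$ vanishes and has codimension $1$ otherwise.

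Part (i) then reduces to pinning down $\ker g$. If every $c_i = 0$, I subtract an appropriate coboundary (extending a $\K$-cochain by $0$ on $\sigma$) to represent any kernel class by a scalar multiple of $\sigma^*$; then $[\sigma^*]\neq 0$, for a cobounding $\eta$ of $\sigma^*$ would produce a cocycle $\eta|_\K$ on $\K$ with $\eta|_\K(\partial\sigma) = 1$, contradicting that every $c_i$ vanishes. Hence $\ker g = \langle[\sigma^*]\rangle$ and $g$ is surjective of nullity $1$. In the alternative case some $c_{i_0}\neq 0$, the cochain $\eta = \tfrac{1}{c_{i_0}}\alpha_{i_0}$ (extended by $0$ on $\sigma$) satisfies $\delta\eta = \sigma^*$, so $[\sigma^*] = 0$ and the same reduction gives $\ker g = 0$; combined with the codimension count, $g$ is injective of corank $1$.

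Part (ii) is a direct verification: $(\alpha_j - \tfrac{c_j}{c_{i_0}}\alpha_{i_0})(\partial\sigma) = c_j - c_j = 0$, so this cocycle of $\K$ lifts to one of $\K\cup\{\sigma\}$, placing its class in $\im g$. For part (iii), parts (i) and (ii) together already identify $\im g$ as the codimension-$1$ subspace spanned by $\{[\alpha_k - \tfrac{c_k}{c_{i_0}}\alpha_{i_0}]\}_{k\neq i_0}$, so $(\im g)^\bot$ is one-dimensional, and using the orthonormality of $\{[\alpha_i]\}$ an explicit generator is obtained by solving a small linear system demanding orthogonality against this basis. The only care required throughout is keeping straight the distinction between cochains on $\K$ and their extensions to $\K\cup\{\sigma\}$, and verifying that $c_i$ depends only on the cohomology class $[\alpha_i]$; no substantial obstacle arises beyond this bookkeeping.
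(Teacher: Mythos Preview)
Your proposal is correct and follows essentially the same route as the paper: extend cocycles of $\K$ by zero on $\sigma$, observe that the extension is a cocycle iff $\alpha(\partial\sigma)=0$, and use this to identify $\im g$ as the hyperplane $\{\sum a_jc_j=0\}$ and then its orthogonal complement. The only notable difference is that the paper defers part~(i) entirely to the argument in~\cite{DBLP:journals/dcg/SilvaMV11}, whereas you give a self-contained proof (in particular the neat argument that $[\sigma^*]\neq 0$ when all $c_i=0$ by exhibiting a contradiction with a cobounding $\eta$); for part~(iii) the paper writes down the generator $[\sum_i c_i\alpha_i]$ explicitly and checks orthogonality, which is exactly what your ``small linear system'' would produce.
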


In light of this lemma, we see that Algorithm~\ref{alg:greedy-rule-reversed-diamond} reduces the cohomology basis $\{[\alpha_f]\}_{f \in F}$ so that it can be partitioned into a basis of $\im g$ and a basis of $(\im g)^\bot$; the order in which the reduction is computed coincides with the new decomposition of the top zigzag module dictated by the Reversed Reflection Diamond Principle. This is detailed at the end of the section.

Note that Lemma~\ref{lem:elementary_prop_inclusion_cohomology}(i) gives a computable criterion to check whether a map $\sigma^*$ is injective or surjective. Note also that items (i) and (ii) are already used, with a specific cohomology basis, in the cohomology algorithm for standard persistence~\cite{DBLP:journals/dcg/SilvaMV11}. 

\begin{proof}[Lemma~\ref{lem:elementary_prop_inclusion_cohomology}] 
\noindent
{\bf {\bf (i)}} The proof of~\cite{DBLP:journals/dcg/SilvaMV11}, relying on a specific cohomology basis, adapts directly to our lemma.


For the remainder of the proof, we define for every cocycle $\alpha_i$, $i \in I$, of $\Chain^*(\K)$ the cochain $\alpha_i \in \Chain^*(\K \cup \{\sigma\})$, 
extended to the simplices of $\K \cup \{\sigma\}$ by setting $\alpha_i(\sigma) = 0$. We use the same notation $\alpha_i$. This is not a cocycle in general because $\delta \alpha_i (\sigma) = c_i$ may not be~$0$. 
Note that, for an application $g$ induced at the cohomology level by an elementary inclusion, $g([\alpha'])$ is equal to $[\alpha] \in \Hom^*(\K)$, where $\alpha'$ is a cocycle of $\Chain^*(\K \cup \{\sigma\})$ and $\alpha$ is the cocycle of $\Chain^*(\K)$ equal to the restriction of $\alpha'$ to the simplices of $\K$---see~\cite{Munkres-elementsalgtop1984}.

\vspace{0.2cm}

\noindent
{\bf (ii)} If $c_{i_0} \neq 0$, then every $\alpha_j - \frac{c_j}{c_{i_0}} \alpha_{i_0}$, for $j \in I \setminus \{i_0\}$, is a cocycle of $\Chain^*(\K \cup \{\sigma\})$. Hence, every $[\alpha_j - \frac{c_j}{c_{i_0}} \alpha_{i_0}]$ is a cohomology class of $\Hom^*(\K)$ belonging to $\im g$.

\vspace{0.2cm}

\noindent
{\bf (iii)} The family $\{[\alpha_i - \frac{c_i}{c_{i_0}} \alpha_{i_0}]\}_{i \in I \setminus \{i_0\}}$ is a basis of $\im g$, because the classes $\{[\alpha_i]\}_{i \in I}$ are linearly independent and $\dim (\im g) = |I| - 1$. 
%
The cohomology class $[\sum_{i \in I} c_i \alpha_i]$ is non zero and orthogonal to $\im g$, as for any $j \in I\setminus \{i_0\}$, $\langle [\sum_{i \in I} c_i \alpha_i], [\alpha_i - \frac{c_i}{c_{i_0}} \alpha_{i_0}] \rangle = 0$.            \hfill $\Box$
\end{proof}

Now, we prove that Algorithm~\ref{alg:greedy-rule-reversed-diamond} is correct and terminates, with the following lemma regarding the pairing of birth and death indices in the algorithm:

\begin{lemma}
\label{lem:lemma_indices_pairing_algorithm}
At the $j$th iteration of the algorithm, the following is true:
\begin{enumerate}[(i)]
\item if ``$b_j$ not yet paired'', then there exists an index $j_0$ satisfying $b_{j_0} \leqb b_j$ and $d_{j_0} \leqd d_j$,
\item if ``$b_j$ already paired'', then $b_{\ell_j} \geqb b_j$.
\end{enumerate}
\end{lemma}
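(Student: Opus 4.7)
The plan is to prove both items of the lemma simultaneously, by induction on $j$, using a single structural invariant about the greedy pairing. Before starting, I would record the easy observation that at the start of iteration $j$ every index $k \in \{1,\ldots,j-1\}$ is already paired: at iteration $k$, case A pairs $b_k$ itself while case B leaves $b_k$ in the paired set from before. Consequently, the set of unpaired indices at the beginning of iteration $j$ is contained in $\{j, j+1, \ldots, p\}$.

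The invariant I would maintain is: \emph{at the start of iteration $j$, the index realising $\minb\{b_k : j \leq k \leq p\}$ is already paired.} The base case $j = 1$ is immediate because $b_{\ell_p}$ is paired from the start and is the $\leqb$-minimum of all $b_1, \ldots, b_p$. For the inductive step, let $m \in \{j,\ldots,p\}$ achieve this minimum. If $m > j$, then $m$ is still the minimiser over $\{j+1,\ldots,p\}$ and remains paired after iteration $j$. If $m = j$, then $b_j$ is already paired, so iteration $j$ executes case B and pairs some $\ell_j \in \{j+1,\ldots,p\}$; a short case split then shows that the new $\leqb$-minimiser over $\{j+1,\ldots,p\}$ is either an index already paired before iteration $j$ (still paired) or coincides with $\ell_j$ (paired at iteration $j$).

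From the invariant I would derive (i) immediately: if $b_j$ is unpaired at iteration $j$, the invariant provides $j_0 \in \{j,\ldots,p\}$ with $j_0$ paired and $b_{j_0} \leqb b_j$; since $b_j$ is unpaired we must have $j_0 \geq j+1$, and the ordering $d_1 \geqd \cdots \geqd d_p$ then forces $d_{j_0} \leqd d_j$. For (ii), I would trace back the iteration at which $b_j$ first became paired: either it was paired initially, in which case $b_j = b_{\ell_p}$ is the global $\leqb$-minimum, or it was paired at some iteration $k < j$ in case B as the $\leqb$-minimum of the unpaired set at iteration $k$. In both sub-cases, the unpaired set at iteration $j$ is a subset of the unpaired set at the moment $b_j$ became paired, so every index $q$ that is unpaired at iteration $j$ satisfies $b_q \geqb b_j$; taking the minimum yields $b_{\ell_j} \geqb b_j$.

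The main obstacle is isolating the right invariant: the algorithm interleaves the natural case A pairings with the compensatory case B pairings, and it is not a priori obvious that both items of the lemma can be derived from one combinatorial property. The unifying observation is that the $\leqb$-minimum over the shrinking unpaired set behaves monotonically as iterations proceed, which is exactly what the invariant captures; once this is in place the case analyses for (i) and (ii) are routine.
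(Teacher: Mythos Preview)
Your proposal is correct, and it takes a genuinely different route from the paper for part~(i).

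The paper proves~(i) by a direct case analysis at iteration~$j$: it distinguishes whether $\ell_p = p$, whether $\ell_p > j$, or whether $\ell_p < j$; in the last case it introduces the set $R = \{k < j : b_{\ell_k}\neq b_k\}$ of earlier ``else'' iterations, takes $k_0$ maximal in~$R$, and argues that $j_0 = \ell_{k_0}$ has the required properties. Your approach instead isolates a single inductive invariant---the $\leqb$-minimiser over the tail $\{j,\ldots,p\}$ is always already paired---and derives~(i) from it in one line (the minimiser is paired, hence differs from the unpaired~$j$, hence lies strictly to the right, hence has smaller death by the sorting of the~$d_k$'s). The invariant also makes transparent the fact used in the paper's one-sentence proof of~(ii), namely that the $\leqb$-minimum over the unpaired set is nondecreasing; your argument for~(ii) is then essentially identical to the paper's.

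What each approach buys: your invariant is conceptually cleaner and unifies~(i) and~(ii), avoiding the somewhat ad~hoc trace-back through~$R$ in the case $\ell_p < j$. The paper's argument, on the other hand, is entirely local to iteration~$j$ and needs no inductive setup. Both are complete; your version would be a welcome simplification.
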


\begin{proof} 
{\bf (i)} Note first that $\ell_p \neq j$ because $b_j$ is not yet paired at iteration $j$, and $b_{\ell_p}$ is marked as ``paired'' before executing Algorithm~\ref{alg:greedy-rule_rev}. 
We study the different cases. 

If $p = \ell_p$, then index $j_0 = p = \ell_p$ has the desired properties because $b_{\ell_p}$ and $d_p$ are minimal in their respective orders. Suppose then $p \neq \ell_p$. 

If $\ell_p > j$, then $d_{\ell_p} \leqd d_p$ because death indices are sorted, and so $j_0 = \ell_p$ has the desired properties. 

If $\ell_p < j$, then the set of indices $k$ for which $d_k$ is
paired with a different birth index, \emph{i.e.} $R = \{k : k < j
\ \mbox{and}\ b_{\ell_k} \neq b_k \}$, is non-empty (it contains in
particular $\ell_p$). Let $k_0$ be maximal in this set. Then,
$b_{\ell_{k_0}}$ is minimal in the order $\leqb$, among
births that are not yet paired at iteration $k_0$, which implies
$b_{\ell_{k_0}} \leqb b_j$ and $\ell_{k_0} > k_0$. Finally,
$\ell_{k_0} > j$ by maximality of $k_0$ in the set $R$. Consequently,
$d_{\ell_{k_0}} \leqd d_j$, and so $j_0 = \ell_{k_0}$ has the
desired properties.

\vspace{0.2cm}
\noindent
{\bf (ii)} The property follows from the fact that $\minb\left\{b_k~:~1 \leq k \leq p \ \text{and} \ b_k \text{ not yet paired}\right\}$ only increases in the order $\leqb$ during the execution of the algorithm. \hfill $\Box$
\end{proof}

Lemma~\ref{lem:lemma_indices_pairing_algorithm}(i) ensures that there is always an index $j_0$ to pick when executing line $2$ of Algorithm~\ref{alg:greedy-rule-reversed-diamond}. As mentioned earlier, Algorithm~\ref{alg:greedy-rule-reversed-diamond} reduces the cohomology basis so that it aligns with the direct sum decomposition $\Hom^*(\K) = \im g \oplus (\im g)^\bot$. We finally get a basis of $\Hom^*(\K_i \cup \{\sigma\})$ by discarding the cohomology class of the reduced basis generating $(\im g)^\bot$, and by extending all obtained cocycles $\alpha$ with $\alpha(\sigma) = 0$.

Finally, we conclude:

\begin{lemma}
The basis of $\Hom^*(\K \cup \{\sigma\})$ obtained with
Algorithm~\ref{alg:greedy-rule-reversed-diamond} is compatible with
the top zigzag module in~(\ref{eq:zz_arrowreflectiondiagram_13}).
\end{lemma}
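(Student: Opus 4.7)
\noindent\emph{Proof plan.} The plan is to verify, piece by piece, that each cocycle produced by Algorithm~\ref{alg:greedy-rule-reversed-diamond} admits a representative sequence in the top zigzag module of~(\ref{eq:zz_arrowreflectiondiagram_13}) that spans an interval summand matching exactly the decomposition predicted by Theorem~\ref{thm:reversed_d_princ}. The basis under consideration splits into three kinds of vectors: (a) the untouched $\alpha_f$ with $f \notin \{1,\ldots,p\}$, extended by $\alpha_f(\sigma)=0$; (b) the modified $\alpha'_j$ for $1\le j \le p-1$; and (c) the two ``endpoint'' cocycles $\alpha_{\text{left}}$ and $\alpha_{\text{right}}$.

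First, for the untouched vectors, $c_f=0$ by assumption, so by Lemma~\ref{lem:elementary_prop_inclusion_cohomology}(ii) (or directly because $\delta\alpha_f(\sigma)=0$) each such $\alpha_f$ extends to a cocycle on $\K_i \cup \{\sigma\}$. Its representative sequence in the bottom zigzag module already spans a summand $\Imod[b_f;d_f]$ contained in the common direct summand $\Xmod$, and the Reversed Reflection Diamond Principle shows that the same $\Xmod$ appears verbatim in the decomposition of the top module; hence the extended $\alpha_f$ remains attached to the same interval.

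Next, for the vectors $\alpha'_j = \alpha_j - \frac{c_j}{c_{\ell_j}}\alpha_{\ell_j}$, Lemma~\ref{lem:elementary_prop_inclusion_cohomology}(ii) guarantees that $\alpha'_j$ is a cocycle on $\K_i \cup \{\sigma\}$, so the plan is to trace its representative sequence. By Lemma~\ref{lem:lemma_indices_pairing_algorithm} the index $\ell_j$ satisfies $b_{\ell_j} \leqb b_j$ and $d_{\ell_j} \leqd d_j$; on the bottom module these inequalities translate into the property that the representative sequence of $\alpha_{\ell_j}$ is nonzero on an index range that, in the quiver orientations, nests the range of $\alpha_j$ on its death side and properly extends it on its birth side. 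Taking the combination cancels the original $\alpha_j$ contribution exactly at death $d_j$, while importing the birth index $b_{\ell_j}$; after passing the reflection at index $i$ (which splits contributions according to whether they pass through the middle of the diamond), the resulting sequence spans $\Imod[b_{\ell_j};d_j]$, as required. The same analysis applied to $\alpha_{\text{right}} := \alpha_p$ shows that because $c_p \neq 0$ is never cancelled, its extension lives in $(\im g)^\bot$ and is supported only on indices $\ge i+1$, yielding $\Imod[i+1;d_p]$; for $\alpha_{\text{left}} := \alpha_{\ell_p}$, no combination is applied and $b_{\ell_p}$ is minimal, so its representative sequence is truncated precisely at index $i-1$, yielding $\Imod[b_{\ell_p};i-1]$.

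The main obstacle is the bookkeeping in the second step: one has to verify that the greedy pairing computed by Algorithm~\ref{alg:greedy-rule-reversed-diamond} produces \emph{exactly} the pairing $(b_{\ell_j},d_j)$ prescribed by Theorem~\ref{thm:reversed_d_princ}. This is where Lemma~\ref{lem:lemma_indices_pairing_algorithm} is crucial: part (i) ensures that whenever the algorithm is about to pair a fresh birth, a valid candidate $j_0$ with $b_{j_0}\leqb b_j$ and $d_{j_0}\leqd d_j$ exists, and part (ii) ensures that whenever the algorithm reuses a previously cancelled birth, the selected $b_{\ell_j}$ lies above $b_j$ in $\leqb$ and is indeed the minimum unpaired birth. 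Combining these two facts with the dualisation argument that made Theorem~\ref{thm:reversed_d_princ} follow from the primal reflection diamond principle shows that the resulting family of representative sequences realises the decomposition of the top module, and hence forms a compatible basis of $\Hom^*(\K_i\cup\{\sigma\})$. \hfill $\Box$
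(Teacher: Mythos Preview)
Your overall strategy matches the paper's: verify that each cocycle produced by Algorithm~\ref{alg:greedy-rule-reversed-diamond} is attached to the interval summand that Theorem~\ref{thm:reversed_d_princ} predicts. The execution of the central step, however, is not correct as written.

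First, you treat the update uniformly as $\alpha'_j=\alpha_j-\tfrac{c_j}{c_{\ell_j}}\alpha_{\ell_j}$, but that is only the formula in the \emph{else} branch. In the \emph{if} branch the subtracted term is $\tfrac{c_j}{c_{j_0}}\alpha_{j_0}$ for an auxiliary index $j_0$ satisfying $b_{j_0}\leqb b_j$ and $d_{j_0}\leqd d_j$, while the pairing records $b_{\ell_j}=b_j$. Conflating $j_0$ with $\ell_j$ there would make $\alpha'_j=0$. Second, you assert that ``the index $\ell_j$ satisfies $b_{\ell_j}\leqb b_j$ and $d_{\ell_j}\leqd d_j$'', but Lemma~\ref{lem:lemma_indices_pairing_algorithm}(ii) says the \emph{opposite} for the birth in the \emph{else} branch: $b_{\ell_j}\geqb b_j$. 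With that inequality reversed, your description ``cancels the original $\alpha_j$ contribution exactly at death $d_j$, while importing the birth index $b_{\ell_j}$'' no longer matches what the linear combination actually does to the representative sequence.

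What is missing is the one clean fact the paper isolates: if $v$ and $w$ have representative sequences spanning $\Imod[b;d]$ and $\Imod[b';d']$, then any non-trivial combination $v+\gamma w$ has a representative sequence spanning $\Imod[\maxb\{b,b'\};\maxd\{d,d'\}]$. With this in hand the two branches are immediate. In the \emph{if} branch, $b_{j_0}\leqb b_j$ and $d_{j_0}\leqd d_j$ give $\Imod[b_j;d_j]=\Imod[b_{\ell_j};d_j]$. In the \emph{else} branch, Lemma~\ref{lem:lemma_indices_pairing_algorithm}(ii) gives $b_{\ell_j}\geqb b_j$; moreover every birth index still unpaired at step $j$ necessarily exceeds $j$ (each $b_k$ with $k<j$ has been paired at or before iteration $k$), so $\ell_j>j$ and hence $d_{\ell_j}\leqd d_j$ by the sorting of the $d_k$. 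The $\max$-formula then yields $\Imod[b_{\ell_j};d_j]$ in both cases, and compatibility with the decomposition of Theorem~\ref{thm:reversed_d_princ} follows.
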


\begin{proof}
Given two vectors $v$ and $w$ in $V_i$, and two representative sequences $(x_1, \ldots, v, \ldots , x_{i+m})$ and $(y_1, \ldots, w, \ldots , y_{i+m})$, we can extend additions and scalar multiplications in $V_i$ to the representative sequences:
\[
v + c \cdot w \ \text{is represented by}\ (x_1 + c \cdot y_1, \ldots, v + c \cdot w, \ldots , x_{i+m} + c \cdot y_{i+m})
\]
By the definition of orders $\leqb$ and $\leqd$, if the representatives sequences span submodules isomorphic respectively to $\Imod[b,d]$ and $\Imod[b';d']$, their sum, provided $c \neq 0$, spans a submodule isomorphic to $\Imod[\maxb \{b,b'\}; \maxd \{d;d'\}]$. Using Lemma~\ref{lem:lemma_indices_pairing_algorithm}(ii) and the fact that all arrows from indices $i+1$ to $i+m$ are backward, we conclude that at iteration $j$ of Algorithm~\ref{alg:greedy-rule-reversed-diamond}, the element $\alpha'_j$ produced represents a cohomology class attached to the interval summand $\Imod[b_{\ell_j}; d_j]$. Note that, since we are in cohomology, orders $\leqb$ and $\leqd$ are reversed when dualizing. We conclude using Theorem~\ref{thm:reversed_d_princ}. \hfill $\Box$
\end{proof}



\noindent
{\bf Remark:} For the sake of simplicity, we have not provided the update procedure for cochain $\alpha_g$, $g \in G$ or $\alpha_h$, $h \in H$, in the main description of the implementation. To maintain the pairing $g \leftrightarrow h$, standing for $\delta \alpha_g = \alpha_h$, under the insertion of $\sigma$, we set $\alpha_h(\sigma) = \alpha_g(\partial \sigma)$.

Note that this is not necessary in the cohomology algorithm for standard persistence~\cite{DBLP:journals/dcg/SilvaMV11}, as only the set of cocycles $\{\alpha_f\}_{f \in F}$ is maintained. In all known algorithms for zigzag persistence, the sets of cochains $\{\alpha_g\}_{g \in G}$ and $\{\alpha_h\}_{h \in H}$ need to be maintained to process backward arrows.

\begin{figure}[t]
\centering
\hspace{1.3cm} Standard persistence \hspace{0.2cm} Sparse persistence \hspace{1cm} Zigzag Persistence 
\setlength{\tabcolsep}{5pt}
\begin{tabular}{|l|rr|rr|rr|rrr|}
\hline
Data & \#$P$ & $d$ & \#$\K_{\text{max}}$ & $T_{\text{std pers.}}$ & \#$\K_{\text{max}}$ & $T_{\text{sparse pers.}}$ & \# arrows & \#$\K_{\text{max}}$ & $T_{\text{zz pers.}}$ \\
\hline
{\bf Cli} & $2000$ & $4$ & $230 \cdot 10^6$ & $3147$ sec. & $7474$ &  $24$ sec. & $2.4 \cdot 10^6$ & $50840$ & $285$ sec. \\ 
\hline
\end{tabular}
\setlength{\tabcolsep}{2pt}
\begin{tabular}{l cc c cc r}
\begin{minipage}{0.3\textwidth}
\includegraphics{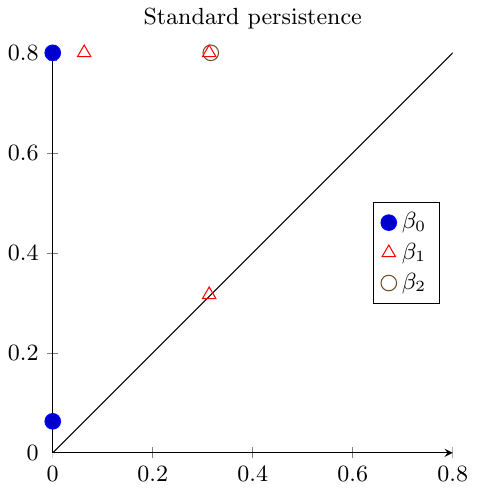}
\end{minipage} 
& & & 
\begin{minipage}{0.3\textwidth}
\includegraphics{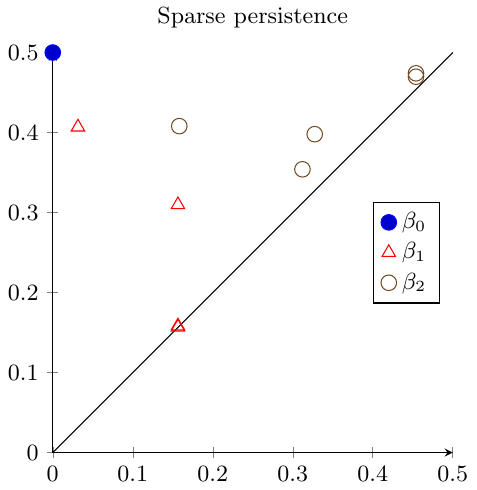}
\end{minipage} 
& & & 
\begin{minipage}{0.3\textwidth}
\includegraphics{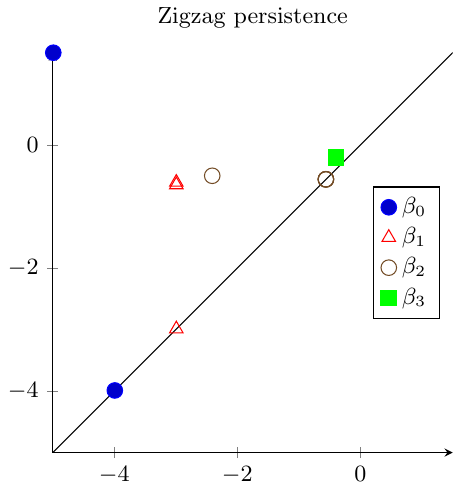}
\end{minipage} \\
\end{tabular}
\caption{Performance and best possible persistence diagrams obtained by the different methods on the data {\bf Cli}. Note that the zigzag persistence diagram on the right has two superimposed $1$-dimensional persistent features.}
\label{fig:cli_diagrams}
\end{figure}

\section{Experiments}
\label{sec:experiments}

In this section we report on the performance of the zigzag persistent cohomology algorithm presented above. The objective of the experiments is twofold. First, we aim at positioning zigzag persistence in the scope of computational methods in topological data analysis. To do so, we chose to compare the performance of zigzag persistence with state of the art methods in the field; specifically, optimised algorithm for standard persistence and persistence of sparse filtration. 
We find that zigzag persistence capable of extracting more topological information from point clouds, where the other two methods fail. 
Second, we aim at comparing the time and memory performance of the zigzag persistent cohomology algorithm introduced in this article and zigzag persistent homology. We find that the zigzag persistent cohomology algorithm scales significantly better to more challenging examples. 

We run experiments on a $3.40$GHz computer with $16$GiB RAM, running Linux. All codes are compiled with {\tt gcc 5.4.0}. Timings are measured by the ${\tt clock()}$ function and memory consumption is measured by the ${\tt getrusage()}$ function in {\tt C}. We use a collection of datasets, both generated and from real life measurements. The generated datasets are {\bf Cli}, containing points from the {\em Clifford dataset}, described in~\cite{os-zz-14}, which samples different manifolds at different scales, and {\bf S3\_r}, which contains points sampled uniform randomly on the unit $3$-sphere embedded in $\R^4$. The ``real life'' datasets are {\bf NatI}, coming from high-contrast patches extracted from natural images~\cite{Carlsson:2008:LBS:1325290.1325300}, and {\bf GePh}, coming from the Gesture Phase Segmentation~\cite{Lichman:2013,DBLP:journals/eswa/MadeoPL16}. Figures~\ref{fig:perfzz} give details about the size \#P and embedding dimension $d$ of the point sets. They also provide the maximal size of a complex stored ``\#$\K_{\text{max}}$'' and the total number of arrows ``\# arrows'' encountered (in zigzag persistence) during computation.

All filtered complexes in our experiments are {\em Rips filtrations}~\cite{DBLP:books/daglib/0025666} for standard persistence, {\em Simplicial batch collapse filtrations}~\cite{DBLP:conf/esa/DeySY16} for sparse persistence, and {\em oscillating Rips zigzag filtrations}~\cite{os-zz-14} for zigzag persistence. These construction are known, both theoretically and practically, to furnish correct persistence diagrams.


\vspace{0.3cm}
\noindent
{\bf Zigzag persistence and standard persistence.} In this section, we compare the performance of zigzag persistence with existing state of the art methods in topological data analysis. We compare our implementation with the standard persistence algorithm of the library {\tt Gudhi}~\cite{gudhi:PersistentCohomology}, which is reported as one of the state of the art persistence algorithms in the field\footnote{The objective of this section is not to compare the efficiency of computational libraries for standard persistence, but rather the ability of different methods to extract topological features from data. We refer to~\cite{DBLP:journals/corr/OtterPTGH15} for one of the detailed studied of the different software libraries for standard persistence.}. We also compare zigzag persistence to the more recent persistent homology of sparse filtrations implemented in the software {\tt Simba}~\cite{simba_lib}, which shows a significantly better memory performance in practice~\cite{DBLP:conf/esa/DeySY16}.

We use the dataset {\bf Cli} with Rips threshold $\rho = 0.8$ for standard persistence, approximation ratio $c=1.001$ for sparse persistence and multipliers $\eta = 2.7$ and $\rho = 2.75$ for the oscillating Rips zigzag persistence. This input is of interest as the points sample a torus, embedded in a $3$-sphere. Hence, at small scales the data represent a torus, and a (sparsely) sampled $3$-sphere at larger scales. Both standard and sparse persistence capture the torus well, but, when standard persistence with {\tt Gudhi} requires a complex of hundred of million simplices for the computation, the sparse persistence of {\tt SimBa} shows much better performance in both time and space, at a cost of a slightly noisier diagram. Only zigzag persistence captures the $3$-dimensional feature that, despite being close to the diagonal compare to the toric features, is the only point in dimension $3$ (no topological noise) and hence can be read as an existing topological feature. Additionally, the zigzag diagram is globally less noisy. During the experiments, the standard persistence was rapidly limited by memory when attempting to extract a $3$-dimensional feature, and the sparse persistence could not find one, even for smaller approximation ratios $c$.

The ability of zigzag persistence to extract topological features of sparsely sampled spaces is remarkable, as it is a common situation in topological data analysis. Note that for our implementation of zigzag persistence, the number of updates, the maximal size of a complex stored and the overall timing remain small.


\vspace{0.3cm}
\noindent
{\bf Performance of zigzag persistence algorithms.} In the section, we compare the time and memory performance of our zigzag persistent cohomology algorithm with the zigzag persistent homology algorithm of the software {\tt Dionysus}~\cite{dionysus_morozov}, based on the theoretical work~\cite{DBLP:journals/focm/CarlssonS10,DBLP:conf/compgeom/CarlssonSM09}. We use generated datasets, with known ground truth topology, and datasets from real life measurements.

\begin{figure}[t]
\centering

\setlength{\tabcolsep}{5pt}
\begin{tabular}{|l|rrrrrr|rr|rr|}
\hline
Data & \#$P$ & $d$ & $\eta$ & $\rho$ & \#arrows & \#$\K_{\text{max}}$ & $T_{\text{{\tt Dio}}}$ & $M_{\text{{\tt Dio}}}$ & $T_{\text{{\tt zz-coH}}}$ & $M_{\text{{\tt zz-coH}}}$ \\
\hline
{\bf Cli} & $2000$ & $4$ & $4.00$ & $4.01$ & $17 \cdot 10^6$ & $1557194$ & $16209$ sec. & $4335$ MB. & $644$ sec. & $1008$ MB. \\
{\bf NatI} & $595$ & $25$ & $3.1$ & $3.11$ & $23 \cdot 10^6$ & $983838$ & $2931$ sec. & $1438$ MB. & $277$ sec. & $620$ MB. \\
{\bf S3\_r} & $1000$ & $4$ & $3.2$ & $3.21$ & $44 \cdot 10^6$ & $1649890$ & $13574$ sec. & $2969$ MB. & $2969$ sec. & $1226$ MB. \\
{\bf GePh} & $1747$ & $19$ & $14$ & $14.1$ & $25 \cdot 10^6$ & $357747$ & $12153$ sec. & $7251$ MB. & $186$ sec. & $364$ MB. \\
\hline
\end{tabular}
\caption{Time and memory performance of the zigzag persistent cohomology algorithm introduced in this article and the zigzag persistent homology algorithm of {\tt Dionysus}.}
\label{fig:perfzz}
\end{figure}

Figure~\ref{fig:perfzz} shows the time ($T$) and memory ($M$) performance of the software {\tt Dionysus} ({\tt Dio}) and our zigzag persistent cohomology algorithm ({\tt zz-coH}) on large instances. Our implementation outperforms significantly {\tt Dionysus} on both time and memory, with running time up to $25$ times faster and memory consumption up to $4.3$ times smaller.


\vspace{0.3cm}
\noindent
{\bf Conclusion.} In light of this study, zigzag persistence completes the toolbox of computational methods in topological data analysis, and is particularly useful for sparsely sampled spaces. In a nutshell, standard persistence is the method of choice for well-sampled spaces and sparse persistence allows the computation to scale up to much larger datasets. Zigzag persistence allows to extract topological features from data of worse quality, while keeping the size of complexes small, but running slower, in general, than sparse persistence. 
Finally, the zigzag cohomology algorithm introduced in this article scales significantly better than the existing zigzag algorithm to larger and more complex inputs. This is an encouraging step towards optimising zigzag persistence for practical purpose, in the same spirit as the prolific and successful research for the optimisation of standard persistence.

The implementation of the zigzag persistent cohomology will be available as part of the software library {\tt Gudhi} at next release.

\newpage

\bibliographystyle{plain}
\bibliography{bibliography}

\newpage

\appendix


\section{Diamond Principles}
\label{app:diamonds}

\noindent
{\bf Exact diamonds.} Consider the diagram:

\begin{equation}
\label{eq:zz_exactdiamonddiagram}
\xymatrix @R-1pc @-0.9pc{
\Wmod\defeq \hspace{-1cm}& & & & \W_i & & & \\
        &\V_1 & \cdots \ar@{<->}[l] \ar@{<->}[r] & \V_{i-1} \ar@{->}[ru]^-{b} \ar@{<-}[rd]_-{a} & & \V_{i+1} 
       \ar@{->}[lu]_-{d} \ar@{<-}[ld]^{c} & \cdots \ar@{<->}[l] \ar@{<->}[r] & \V_n \\
\Vmod\defeq \hspace{-1cm}& & & & \V_i & & & }
\end{equation}
We say that the following diagram:
\begin{equation}
\label{eq:zz_exactness}
\xymatrix{
    V_{i+1} \ar[r]^-{d} & W_i\\
    V_i \ar[u]^-{c} \ar[r]^-{a} & V_{i-1} \ar[u]_-{b}\\}
\end{equation}
is \emph{exact}~\cite{DBLP:journals/focm/CarlssonS10} if $\im D_1 = \ker D_2$ in the sequence 
$\xymatrix{V_i \ar@{->}[r]|-{D_1} & V_{i-1} \oplus V_{i+1} \ar@{->}[r]|-{D_2} & W_i}$, 
where $D_1(v) = (a(v),c(v))$ and $D_2(x,y) = b(x) - d(y)$. Note in particular that an exact diamond 
commutes, i.e. $b \circ a = d \circ c$.

If diagram~(\ref{eq:zz_exactness}) is exact we say that the representations $\Vmod$ and $\Wmod$ are 
related by an \emph{exact diamond at index $i$}. We recall the \emph{Exact Diamond Principle}:
\begin{theorem}[Exact Diamond Principle~\cite{DBLP:journals/focm/CarlssonS10}]
\label{thm:zz_exactdiamondprinciple}
Given $\Vmod$ and $\Wmod$ related by an exact diamond at index $i$, there is a partial bijection between 
the intervals of the decompositions of $\Vmod$ and $\Wmod$:
\begin{enumerate}
\item[-] intervals $\Imod[i;i]$ are unmatched,
\item[-] for $b<i$, intervals $\Imod[b;i]$ are matched with intervals $\Imod[b;i-1]$ and vice versa,
\item[-] for $d>i$, intervals $\Imod[i;d]$ are matched with intervals $\Imod[i+1;d]$ and vice versa,
\item[-] intervals $\Imod[b;d]$ are matched with intervals $\Imod[b;d]$ in all other cases.
\end{enumerate}
\end{theorem}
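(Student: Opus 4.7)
My plan is to combine the restriction principle with the exactness hypothesis on the diamond. Since $\Vmod$ and $\Wmod$ coincide as $A_n$-type representations on the sub-quivers indexed by $[1,i-1]$ and by $[i+1,n]$, differing only at vertex $i$ and its incident maps, the restrictions $\Vmod[1,i-1] = \Wmod[1,i-1]$ and $\Vmod[i+1,n] = \Wmod[i+1,n]$ are literally the same representations. By the restriction principle, their interval decompositions therefore agree. Any interval $\Imod[b,d]$ with $d \leq i-1$ (resp.\ $b \geq i+1$) is itself an interval of $\Vmod[1,i-1]$ (resp.\ of $\Vmod[i+1,n]$), so such intervals match identically between $\Vmod$ and $\Wmod$. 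This handles the ``all other cases'' part of the bijection.

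Next I would count intervals that touch $i$. For each $b \leq i-1$, the multiplicity of $\Imod[b,i-1]$ in the decomposition of the common restriction $\Vmod[1,i-1] = \Wmod[1,i-1]$ is, by the restriction principle, the number of $\Vmod$-intervals starting at $b$ and reaching index $i-1$, which splits into the multiplicities of $\Imod[b,i-1]$ (already matched above), $\Imod[b,i]$, and crossing intervals $\Imod[b,d]$ with $d \geq i+1$. The identical identity holds for $\Wmod$. Cancelling the already-matched $\Imod[b,i-1]$ summands yields, for each $b \leq i-1$, an equality between the $\Vmod$- and $\Wmod$-counts of intervals $\Imod[b,d]$ with $d \geq i$. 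A symmetric restriction argument on $[i+1,n]$ produces, for each $d \geq i+1$, the analogous equality for intervals ending at $d$ and reaching $i$.

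Finally I would invoke exactness of $V_i \stackrel{D_1}{\to} V_{i-1} \oplus V_{i+1} \stackrel{D_2}{\to} W_i$, together with the zigzag-persistence hypothesis that $D_1$ is injective and $D_2$ is surjective, to obtain the dimensional identity $\dim V_i + \dim W_i = \dim V_{i-1} + \dim V_{i+1}$. Writing each dimension as the total multiplicity of intervals of the corresponding decomposition that cover the relevant index, and combining with the two counting identities from the previous paragraph, forces the precise matching stated in the theorem: crossing intervals $\Imod[b,d]$ with $b < i < d$ pair identically, each $\Vmod$-interval $\Imod[b,i]$ pairs with a $\Wmod$-interval $\Imod[b,i-1]$, each $\Vmod$-interval $\Imod[i,d]$ pairs with a $\Wmod$-interval $\Imod[i+1,d]$, and the $\Imod[i,i]$ intervals -- having neither a prefix in $[1,i-1]$ nor a suffix in $[i+1,n]$ -- are unmatched on both sides.

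The main obstacle is promoting the counting identities, which only control totals, into label-aware bijections preserving birth-death pairs. To pin down the birth index $b$ of each interval I would iterate the restriction principle over all prefixes $[1,j]$ with $j \leq i-1$, peeling off intervals by decreasing death time; the nullity/corank~$1$ hypothesis recalled at the end of Section~2 guarantees that each birth index supports at most one interval, so the peeling is unambiguous. A symmetric procedure on suffixes $[j,n]$ with $j \geq i+1$ pins down death indices. Once the birth-death labels are tracked consistently through the restrictions on both sides, the exact-diamond dimensional balance admits only the matching prescribed by the theorem.
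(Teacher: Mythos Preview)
The paper does not prove this theorem: it is quoted from Carlsson--de~Silva~\cite{DBLP:journals/focm/CarlssonS10} and stated without proof in Appendix~\ref{app:diamonds}. So there is no ``paper's own proof'' to compare against; your proposal must stand on its own.

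It does not. The restriction-and-counting strategy cannot recover the birth--death \emph{pairing} of intervals that cross index~$i$. Restricting to $[1,i-1]$ and to $[i+1,n]$ tells you the multiset of births $\{b : [b,d]\text{ crosses }i\}$ and the multiset of deaths $\{d : [b,d]\text{ crosses }i\}$ separately, but not which $b$ is paired with which $d$. Concretely, take an $A_5$ quiver with $i=3$ and consider $\Vmod \cong \Imod[1,4]\oplus\Imod[2,5]$ versus a hypothetical $\Wmod \cong \Imod[1,5]\oplus\Imod[2,4]$: both have identical restrictions to $[1,2]$ (namely $\Imod[1,2]\oplus\Imod[2,2]$) and to $[4,5]$ (namely $\Imod[4,4]\oplus\Imod[4,5]$), and iterating over further sub-prefixes or sub-suffixes never sees index~$3$ and so never separates them. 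Your ``peeling by decreasing death time'' in the last paragraph still only touches prefixes $[1,j]$ with $j\le i-1$, where every crossing interval has already been truncated to death $i-1$; the death labels you want to recover are invisible there. The corank/nullity~$1$ hypothesis buys you uniqueness of births and deaths as labels, but not the matching between them.

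A second, smaller gap: your dimension identity $\dim V_i + \dim W_i = \dim V_{i-1} + \dim V_{i+1}$ requires $D_1$ injective and $D_2$ surjective, yet the definition of exact diamond in the paper imposes only $\im D_1 = \ker D_2$. You would need to argue (or assume) the outer exactness separately. The original proof in~\cite{DBLP:journals/focm/CarlssonS10} avoids both issues by working directly with representative sequences and showing that an interval summand of $\Vmod$ can be carried through the diamond to an explicit summand of $\Wmod$; this is what lets the crossing intervals be matched \emph{individually} rather than only in aggregate.
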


\vspace{0.3cm}
\noindent
{\bf Reflection diamond principle.} 
%
The reflection diamond principle relates the interval decomposition of the bottom and top zigzag modules of:
\begin{equation}
\label{eq:zz_weakdiamonddiagram}
\xymatrix @R-1pc @-0.7pc{
\Wmod\defeq \hspace{-1cm}& & & & \W & & & \\
        &\V_1 & \cdots \ar@{<->}[l] \ar@{<->}[r] & \V \ar@{->}[ru]^f \ar@{->}[rd]_{\idcat} & & \V 
       \ar@{<-}[lu]_f \ar@{<-}[ld]^{\idcat} & \cdots \ar@{<->}[l] \ar@{<->}[r] & \V_n \\
\Vmod\defeq \hspace{-1cm}& & & & \V & & & }
\end{equation}

\begin{theorem}[Reflection Diamond Principle, full statement~\cite{DBLP:conf/soda/MariaO15}]
\noindent
\emph{1.} if $f$ is injective of corank $1$, we have
\[ \Wmod \cong \Vmod \oplus \Imod[i;i] \]

\noindent
\emph{2.} if $f$ is surjective of nullity $1$. Let $\{v_1, \ldots, v_d\}$ be a basis of $V$ that is compatible with the decomposition of $\Vmod$ 
and denote by $b_j$ and $d_j$ the birth and death of the interval summand attached to $v_j$. Let $\ker g$ be generated by $\xi \neq 0$. Up to a reordering of the indices, write 
\[ \xi = a_1 v_1 + \cdots + a_p v_p \ , \ \ \ker g = \langle \xi \rangle \]
with $a_j \neq 0$ for $1 \leq j \leq p$ and $d_1 \leqd \ldots \leqd d_p$. 
Let $b_{\ell_p} = \maxb\{b_j\}_{j = 1, \ldots, p}$ and $d_p = \maxd\{d_j\}_{j = 1, \ldots, p}$, we have
\[
\begin{array}{lcl}
\Vmod\ &\cong& \ \Umod\ \oplus\ \bigoplus_{1\leq j \leq p} \Imod[b_j;d_j] \ \ \ \ \ \ 
\text{and} \\[5pt]
\Wmod\ &\cong& \ \Umod\ \oplus\ \Imod[b_{\ell_p};i-1]\ \oplus\ 
\Imod[i+1;d_p]\ \oplus\ \bigoplus_{j=1}^{p-1} \Imod[b_{\ell_j};d_j]\\
\end{array}
\]
where the pairing $(b_{\ell_{j}},d_j)_{1\leq j\leq p-1}$ is computed
as follows (assuming $b_{\ell_p}$ and $d_p$ are considered as already ``paired''):

\begin{algorithm}
\For{$j$ from $1$ to $p-1$}{
    \lIf{$b_j$ not yet paired}{
      $b_{\ell_j} \leftarrow b_j$; \ \ \ pair $b_{\ell_j}$ with $d_j$}
    \lElse{
      $b_{\ell_j} \leftarrow \displaystyle\maxb_{k= 1,\ldots,p \ \ \ } \left\{b_k : b_k \text{ not yet paired}\right\}$;
      pair $b_{\ell_j}$ with $d_j$}
      }
\caption{Pairing for Surjective Diamond}
\label{alg:greedy-rule}
\end{algorithm}
\end{theorem}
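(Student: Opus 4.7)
The plan is to dispatch the two cases separately. Part~1 reduces to the Exact Diamond Principle (Theorem~\ref{thm:zz_exactdiamondprinciple}) after verifying exactness of the square; Part~2 is handled constructively, by producing an explicit compatible basis of the middle vector space $W$ of $\Wmod$ and reading off the new interval decomposition.

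For Part~1, I first verify that the square formed by the two identity maps on the bottom and the two copies of $f\colon V\to W$ on the top is exact in the sense of~(\ref{eq:zz_exactness}). With $a=c=\idcat$ and $b=d=f$, the map $D_1(v)=(v,v)$ has image the diagonal of $V\oplus V$, while $D_2(x,y)=f(x-y)$ has kernel $\{(x,y) : x-y\in\ker f\}$; these agree precisely when $\ker f=0$. Theorem~\ref{thm:zz_exactdiamondprinciple} then gives a partial bijection between the decompositions of $\Vmod$ and $\Wmod$. A separate short argument, using that the arrows on either side of the diamond in $\Vmod$ are the identity, rules out any summand of $\Vmod$ of type $\Imod[b;i]$, $\Imod[i;d]$, or $\Imod[i;i]$ (such a summand would force a non-zero $1$-dimensional subspace to vanish under an identity map). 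Hence cases~1--3 of Theorem~\ref{thm:zz_exactdiamondprinciple} apply vacuously on the $\Vmod$ side, case~4 matches every interval of $\Vmod$ with itself in $\Wmod$, and the only unmatched intervals are $\Imod[i;i]$ summands on the $\Wmod$ side. The corank-$1$ hypothesis leaves exactly one such summand, so $\Wmod\cong\Vmod\oplus\Imod[i;i]$.

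For Part~2 the diamond is no longer exact (the quotient $\ker D_2/\im D_1$ is one-dimensional, reflecting the nullity), so I work directly with compatible bases. Vectors $v_k$ with $k>p$ (those with $a_k=0$ in the expansion of $\xi$) are unaffected: the representative sequence attached to each such $v_k$ in $\Vmod$ carries over verbatim to $\Wmod$, because its image in $W$ does not involve the kernel direction $\langle\xi\rangle$. The collection of these vectors contributes the unchanged summand $\Umod$. The remaining $p$ vectors $v_1,\ldots,v_p$ are coupled by the relation $\sum_{j=1}^p a_j f(v_j)=0$ in $W$, and I process them in decreasing $\leqd$-order of deaths, beginning with $d_p$. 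The interval $\Imod[b_p;d_p]$ cannot be preserved intact: any representative through the diamond must factor through $W$ at index~$i$, where the coefficients $a_j$ force a collapse. I therefore split it at the diamond, building the left piece $\Imod[b_{\ell_p};i-1]$ from the representative attached to $v_{\ell_p}$ (where $b_{\ell_p}=\maxb\{b_j\}$) truncated just before index~$i$, and leaving $\Imod[i+1;d_p]$ as a freshly-created right half. For $j=1,\ldots,p-1$ the greedy rule of Algorithm~\ref{alg:greedy-rule} then assigns each $d_j$ either to $b_j$ itself (if still available) or to the $\maxb$-maximal unpaired birth, and in each case provides a scalar combination $v_j+c\,v_{\ell_j}$ whose representative sequence spans exactly the claimed interval $\Imod[b_{\ell_j};d_j]$.

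The main obstacle is to prove that this greedy construction actually produces a \emph{legitimate} interval decomposition. Two things must be established: (i) at every iteration~$j$ an appropriate ``donor'' index $j_0$ exists, with the correct relative order of births and deaths, so that the linear combination $v_j+c\,v_{\ell_j}$ indeed spans a representative sequence supported on the prescribed interval (invoking the general principle that summing representative sequences with nested supports produces a representative whose birth and death are $\maxb$/$\maxd$ of the summands); and (ii) the full collection of resulting representatives---together with the $\Umod$-summands and the left/right split of $\Imod[b_p;d_p]$---forms a basis of $W$ compatible with the claimed decomposition of $\Wmod$. Claim~(i) follows from the forward analogue of Lemma~\ref{lem:lemma_indices_pairing_algorithm} (with $\maxb/\maxd$ in the role of $\minb/\mind$), whose proof rests on a case analysis on the relative position of $\ell_p$ and $j$ together with maximality arguments inside a suitable index set~$R$. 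Claim~(ii) is then a dimension count: the number of indecomposable summands before and after agrees, and the underlying change of basis at~$W$ is invertible because it is the quotient of an invertible change of basis at~$V$ by the one-dimensional kernel $\langle\xi\rangle$.
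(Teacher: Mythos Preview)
The paper does not actually prove this theorem: it is recalled verbatim from~\cite{DBLP:conf/soda/MariaO15} in Appendix~\ref{app:diamonds} and used as a black box. Indeed, the paper's own contribution (Theorem~\ref{thm:reversed_d_princ}, the \emph{reversed} reflection diamond principle) is proved precisely by dualising and reducing to the present statement. So there is no in-paper proof to compare against; your sketch is a direct argument along the lines of the original reference.

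That said, your Part~1 has a small orientation mismatch worth flagging. The Exact Diamond Principle as stated in this paper (Theorem~\ref{thm:zz_exactdiamondprinciple}, diagram~(\ref{eq:zz_exactdiamonddiagram})) requires the bottom vertex to be a \emph{source} ($V_{i-1}\leftarrow V_i\to V_{i+1}$), whereas the reflection diamond~(\ref{eq:zz_weakdiamonddiagram}) has the bottom vertex as a \emph{sink} ($V_{i-1}\to V_i\leftarrow V_{i+1}$). Your identification ``$a=c=\idcat$'' silently reverses those two identity arrows. This is harmless---reversing an identity arrow yields an isomorphic zigzag module with the same interval decomposition---but it should be said explicitly before invoking Theorem~\ref{thm:zz_exactdiamondprinciple}. (Note that the paper's proof of Theorem~\ref{thm:reversed_d_princ}, Part~1, faces no such issue: there the bottom arrows already point outward.)

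Your Part~2 outline is the right shape: the unaffected summands give $\Umod$; the kernel vector $\xi$ yields the two ``half'' intervals $\Imod[b_{\ell_p};i-1]$ and $\Imod[i+1;d_p]$ (via the representative sequences $(\ldots,\xi,0,0,\ldots)$ and $(\ldots,0,0,\xi,\ldots)$, using $f(\xi)=0$); and the greedy combinations $v_j+c\,v_{\ell_j}$ produce the remaining $\Imod[b_{\ell_j};d_j]$. Your appeal to the forward analogue of Lemma~\ref{lem:lemma_indices_pairing_algorithm} for well-definedness of the donor index is exactly the right ingredient. One point you glossed over: to get a basis of $W$ (not just of $V$) you must also check that the $p-1$ combinations $f(v_j+c\,v_{\ell_j})$, together with the images $f(v_k)$ for $k>p$, are linearly independent in $W$---this is where the invertibility of your change of basis on $V$ and the one-dimensionality of $\ker f$ are used, as you note in claim~(ii).
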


\noindent
{\bf Transposition diamond principle.} Consider the diagram:
\begin{equation}
\label{eq:zz_transpositiondiamonddiagram}
\xymatrix @R-1pc @-0.9pc{
\Wmod\defeq \hspace{-1cm}& & & & \W_i & & & \\
        &\V_1 & \cdots \ar@{<->}[l] \ar@{<->}[r] & \V_{i-1} \ar@{->}[ru]^-{b} \ar@{->}[rd]_-{a} & & \V_{i+1} 
       \ar@{<-}[lu]_-{d} \ar@{<-}[ld]^{c} & \cdots \ar@{<->}[l] \ar@{<->}[r] & \V_n \\
\Vmod\defeq \hspace{-1cm}& & & & \V_i & & & }
\end{equation}
We say that the representations $\Vmod$ and $\Wmod$ are related by a \emph{transposition diamond} if 
the following diagram is \emph{exact}:

\begin{figure}[h!]
\begin{minipage}{0.35\linewidth}
$\xymatrix{ W_i \ar[r]^-{d} & V_{i+1}\\
    V_{i-1} \ar[u]^-{b} \ar[r]^-{a} & V_i \ar[u]_-{c}\\}$ 
\end{minipage}
\begin{minipage}{0.60\linewidth} 
$\xymatrix{V_{i-1} \ar@{->}[r]|-{D_1} & V_{i} \oplus W_{i} \ar@{->}[r]|-{D_2} & V_{i+1}}$
with $D_1(v) = (a(v), b(v))$ and $D_2(x,y) = c(x) - d(y)$ such that $\im D_1 = \ker D_2$
\end{minipage}
\end{figure}
Note that the transposition diamond diagram~(\ref{eq:zz_transpositiondiamonddiagram}) is similar to 
the exact diamond diagram~(\ref{eq:zz_exactdiamonddiagram}) except that the diamond is ``rotated by 
$90^o$''.

\begin{theorem}[Transposition Diamond Principle~\cite{DBLP:conf/soda/MariaO15}]
\label{thm:zz_transpositionprinciple}
Given $\Vmod$ and $\Wmod$ related by a transposition diamond as above, we assume that the 
maps $a,b,c,d$ are of two different types: injective of corank $1$ and surjective of nullity $1$. 
We have:

\noindent
{\bf 1.} if $a$ and $c$ surjective of nullity $1$ then $\Vmod \cong \Umod \oplus \Imod[b;i-1] 
\oplus \Imod[b';i]$ for some indices $b,b' \leq i-1$. 
Let $(\ldots, u, 0,0 \ldots)$ and $(\ldots v, a(v), 0 \ldots)$, $u,v \in V_{i-1}$, 
be representative 
sequences for the interval summands $\Imod[b;i-1]$ and $\Imod[b';i]$ respectively. There exists 
$\gamma \in \Field$ such that $v + \gamma u \in \ker b$ and:
  \begin{enumerate}
  \item[(i)] if $\gamma = 0$ then $\Wmod \cong \Umod \oplus \Imod[b;i] \oplus \Imod[b';i-1]$,
  \item[(ii)] if $\gamma \neq 0$ then $\Wmod \cong \Umod \oplus \Imod[\maxb \{b,b'\};i-1]\oplus \Imod[\minb\{b,b'\};i]$.
  \end{enumerate}

\noindent
{\bf 2.} if $a$ and $c$ injective of corank $1$ then $\Vmod \cong \Umod \oplus \Imod[i;d] \oplus 
\Imod[i+1;d']$ for some indices $d,d' \geq i+1$. 
Let $(0 \ldots 0,v,c(v),\ldots)$ and $(0 \ldots 0, 0, u, \ldots)$, $v \in V_i$ and 
$u \in V_{i+1}$, be representative sequences for the interval summands $\Imod[i;d]$ and 
$\Imod[i+1;d']$ respectively. There exists $\gamma \in \Field$ such that $u + \gamma c(v) \in \im d$ and:
  \begin{enumerate}
  \item[(i)] if $\gamma = 0$ then $\Wmod \cong \Umod \oplus \Imod[i+1;d] \oplus \Imod[i;d']$,

\vspace{0.2cm}

  \item[(ii)] if $\gamma \neq 0$ then $\Wmod \cong \Umod \oplus \Imod[i; \maxd\{d,d'\}] \oplus \Imod[i+1; \mind \{d,d'\}]$.
  \end{enumerate}

\noindent
{\bf 3.} if $a$ injective of corank $1$ and $c$ surjective of nullity $1$ then:\\
$$\Vmod \cong \Umod \oplus \Imod[i;d] \oplus \Imod[b;i] \ \text{  and  } \  
\Wmod \cong \Umod \oplus \Imod[i+1;d] \oplus \Imod[b;i-1]$$

\noindent
{\bf 4.} if $a$ surjective of nullity $1$ and $c$ injective of corank $1$ then:\\
$$\Vmod \cong \Umod \oplus \Imod[i+1;d] \oplus \Imod[b;i-1]  \ \text{  and  } \  
\Wmod \cong \Umod \oplus \Imod[i;d] \oplus \Imod[b;i]$$
\end{theorem}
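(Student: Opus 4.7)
The proof is by case analysis on the types of $a$ and $c$, each of which is either injective of corank~$1$ or surjective of nullity~$1$. The restriction principle immediately reduces the problem: interval summands of $\Vmod$ whose supports avoid index $i$ descend unchanged to $\Wmod$ and form the common $\Umod$ component in all four cases. I therefore only need to track summands whose supports contain $i-1$, $i$, or $i+1$, and the ranks of $a, c$ control which such summands appear. Throughout, the commutativity relation $c \circ a = d \circ b$, a consequence of exactness of the diamond, is the main algebraic tool.

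For cases 3 and 4 (mixed types), a direct dimension count using $\dim \ker D_2 = \dim \im D_1$ pins down the ranks of $b$ and $d$. In case 3, for instance, $a$ injective of corank~$1$ and $c$ surjective of nullity~$1$ force $b$ surjective of nullity~$1$ and $d$ injective of corank~$1$. The summands $\Imod[i;d]$ (supplying the corank of $a$) and $\Imod[b;i]$ (supplying the nullity of $c$) in $\Vmod$ must then become $\Imod[i+1;d]$ and $\Imod[b;i-1]$ in $\Wmod$ respectively, by the same rank analysis applied to $b, d$. Case 4 is symmetric.

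Cases 1 and 2 are dual under reversing the arrows and dualizing the vector spaces of the zigzag (injective of corank $1$ is interchanged with surjective of nullity~$1$, and the orders $\leqb, \leqd$ are swapped), so I argue case~1 in full and derive case~2 by applying the same argument to the dual zigzag. Assume $a$ and $c$ are both surjective of nullity~$1$. Pick $u \in V_{i-1}$ spanning $\ker a$ and $v \in V_{i-1}$ with $a(v)$ spanning $\ker c$; by compatibility of the basis with the decomposition of $\Vmod$, $u$ and $v$ are the $V_{i-1}$-representatives of distinct summands $\Imod[b;i-1]$ and $\Imod[b';i]$ respectively. Commutativity gives $d(b(u)) = c(a(u)) = 0$ and $d(b(v)) = c(a(v)) = 0$, so $b(u), b(v) \in \ker d$; a further exactness argument, tracing every element $(0,y) \in \ker D_2$ back through $D_1$ via $\ker a = \langle u \rangle$, shows $\ker d = \langle b(u) \rangle$, so $b(v)$ is a scalar multiple of $b(u)$ and the promised $\gamma$ with $b(v + \gamma u) = 0$ exists. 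In $\Wmod$, I split the two-dimensional space $\langle u, v \rangle \subset V_{i-1}$ as $\langle u \rangle \oplus \langle v + \gamma u \rangle$ when $\gamma \neq 0$, or as $\langle u \rangle \oplus \langle v \rangle$ when $\gamma = 0$. The first summand maps via $b$ to $\langle b(u) \rangle \neq 0$ in $W_i$ and is then killed by $d$, yielding an interval with death $i$ and birth $b$, inherited from the unchanged maps below index $i$. The second is annihilated by $b$, yielding an interval with death $i-1$.

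The delicate point, and the main obstacle, is determining the birth of this second summand in case~1(ii). Tracing $v + \gamma u$ back through the maps at positions $<i-1$, which coincide in $\Vmod$ and $\Wmod$, one finds that the $v$-component forces termination at $V_{b'}$: at $V_{b'-1}$ the representative $v_{b'-1}$ of $v$ is zero, so the combined vector at $V_{b'}$ is $v_{b'} + \gamma u_{b'}$, which lies outside the image of $V_{b'-1} \to V_{b'}$ precisely because $v_{b'}$ does, even though $u_{b'}$ alone would trace back further. The birth is therefore $b' = \maxb\{b,b'\}$, and by construction the other summand has birth $b = \minb\{b,b'\}$. In case~1(i), $v$ itself lies in $\ker b$, so the summand uses $v$ directly and retains birth $b'$, while $u$ (with $b(u) \neq 0$) produces the death-$i$ summand with birth $b$. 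A final dimension check against $\dim W_i = \dim V_i - \dim(\ker a \cap \ker b)$, derived from exactness, confirms that the constructed summands together with $\Umod$ exhaust $\Wmod$, completing the proof.
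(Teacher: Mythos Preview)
The paper does not prove this theorem. The Transposition Diamond Principle is stated in Appendix~A with a citation to~\cite{DBLP:conf/soda/MariaO15}, and no argument is supplied in the present text; the proof lives entirely in that earlier reference. There is therefore nothing in this paper to compare your attempt against.

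For what it is worth, your overall strategy---restriction principle to split off $\Umod$, dimension counts from exactness for the mixed cases~3 and~4, duality between cases~1 and~2, and the derivation of $\gamma$ in case~1 from $\ker d=\langle b(u)\rangle$ by pulling $(0,y)\in\ker D_2$ back through $D_1$---is the standard route and is sound. The place that is not yet airtight is the birth bookkeeping in case~1(ii). You assert $b'=\maxb\{b,b'\}$ and fix $u$ as the death-$i$ representative, but nothing in the hypotheses orders $b$ and $b'$ in~$\leqb$. With the wrong choice the representative sequences of $u$ and of $v+\gamma u$ fail to be in direct sum at indices $j$ where exactly one of $u_j,v_j$ vanishes (there the two sequences become scalar multiples of each other), so they do not exhibit an interval decomposition of~$\Wmod$. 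When $\gamma\neq 0$ both $u$ and $v$ are legitimate death-$i$ representatives (each has $b(\cdot)\neq 0$ and $d(b(\cdot))=0$), and one must select between them according to the $\leqb$-order of $b,b'$; only then do the two resulting interval submodules sit in direct sum at every index and realise the $\minb/\maxb$ split claimed in the statement.
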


\end{document}